\documentclass[a4paper,UKenglish,cleveref, autoref, thm-restate,]{lipics-v2021}
\usepackage{algorithm}
\usepackage{edtable}
\usepackage{algpseudocode}
\usepackage{todonotes}
\usepackage[inline]{enumitem}

\nolinenumbers

\hideLIPIcs  %uncomment to remove references to LIPIcs series (logo, DOI, ...), e.g. when preparing a pre-final version to be uploaded to arXiv or another public repository

\def\RR{{\mathbb R}}
\def\ZZ{{\mathbb Z}}
\def\EE{{\mathbb E}}
\def\NN{{\mathbb N}}

\def\cD{{\mathcal D}}

\def\cX{{\mathcal{X}}}
\def\cR{{\mathcal{R}}}
\def\cF{{\mathcal{F}}}
\def\cG{{\mathcal{G}}}
\def\poly{{\mathrm {poly}}}
\def\Pr{{\mathrm Pr}}

\def \exp{{\mathrm exp}}

\def \ppath{{\mathcal P}}
\def \ptree{{\mathcal T}}
\def \stree{{T}}
\newcommand{\eps}{\varepsilon}

\title{Space-Efficient Approximate Spherical Range Counting in High Dimensions} %TODO Please add

\author{Andreas Kalavas}{Carnegie Mellon University, Pittsburgh, USA}{akalavas@andrew.cmu.edu}{https://orcid.org/0009-0004-8931-108X}{Supported in part by the U.S. National Science Foundation (NSF) Grants ECCS-2145713, CCF-2403194, CCF-2428569, and ECCS-2432545.}%TODO mandatory, please use full name; only 1 author per \author macro; first two parameters are mandatory, other parameters can be empty. Please provide at least the name of the affiliation and the country. The full address is optional. Use additional curly braces to indicate the correct name splitting when the last name consists of multiple name parts.

\author{Ioannis Psarros}{Archimedes, Athena Research Center, Athens, Greece}{ipsarros@athenarc.gr}{https://orcid.org/0000-0002-5079-5003}{This work has been partially supported by project MIS 5154714 of the National Recovery and Resilience Plan Greece 2.0 funded by the European Union under the NextGenerationEU Program.}

\authorrunning{A. Kalavas and I. Psarros} %TODO mandatory. First: Use abbreviated first/middle names. Second (only in severe cases): Use first author plus 'et al.'

\Copyright{Andreas Kalavas and Ioannis Psarros} %TODO mandatory, please use full first names. LIPIcs license is "CC-BY";  http://creativecommons.org/licenses/by/3.0/

\ccsdesc[100]{Theory of computation $\rightarrow$ Design and analysis of algorithms} %TODO mandatory: Please choose ACM 2012 classifications from https://dl.acm.org/ccs/ccs_flat.cfm 

\keywords{Approximate range counting, partition trees, high dimensions} %TODO mandatory; please add comma-separated list of keywords

\category{} %optional, e.g. invited paper

\relatedversion{} %optional, e.g. full version hosted on arXiv, HAL, or other respository/website
\acknowledgements{We would like to thank Dimitris Fotakis for valuable discussions on the topic.}%optional

\EventEditors{John Q. Open and Joan R. Access}
\EventNoEds{2}
\EventLongTitle{42nd Conference on Very Important Topics (CVIT 2016)}
\EventShortTitle{CVIT 2016}
\EventAcronym{CVIT}
\EventYear{2016}
\EventDate{December 24--27, 2016}
\EventLocation{Little Whinging, United Kingdom}
\EventLogo{}
\SeriesVolume{42}
\ArticleNo{23}
\begin{document}

\maketitle

%TODO mandatory: add short abstract of the document
\begin{abstract} 
   We study the following range searching problem in high-dimensional Euclidean spaces: given a finite set $P\subset \mathbb{R}^d$, where each $p\in P$ is assigned a weight $w_p$, and radius $r>0$, we need to preprocess $P$ into a data structure such that when a new query point $q\in \mathbb{R}^d$ arrives, the data structure reports the cumulative weight of points of $P$ within Euclidean distance $r$ from $q$. Solving the problem exactly seems to require space usage that is exponential to the dimension, a phenomenon known as the curse of dimensionality. Thus, we focus on approximate solutions where points up to $(1+\varepsilon)r$ away from $q$ may be taken into account, where $\varepsilon>0$ is an input parameter known during preprocessing. We build a data structure with near-linear space usage, and query time in 
   $n^{1-\Theta(\varepsilon^4/\log(1/\varepsilon))}+t_q^{\varrho}\cdot n^{1-\varrho}$, for some $\varrho=\Theta(\varepsilon^2)$, where 
   %$\tilde{O}(dn^{1-\rho^2} + t_q^{\rho}\cdot n^{1-\rho})$, where 
   %$\rho = \Theta(\varepsilon^{2}/\log(1/\varepsilon))$ and 
   $t_q$ is the number of points of $P$ in the ambiguity zone, i.e., at distance between $r$ and $(1+\varepsilon)r$ from the query $q$. To the best of our knowledge, this is the first data structure with efficient space usage (subquadratic or near-linear for any $\varepsilon>0$) and query time that remains sublinear for any sublinear $t_q$. We supplement our worst-case bounds with a query-driven preprocessing algorithm to build data structures that are well-adapted to the query distribution. 
\end{abstract}

\section{Introduction}

Range searching is a fundamental problem 
with 
wide applications in areas such as computational geometry~\cite{10.5555/1370949}, databases~\cite{10.5555/993519}, geographic information systems (GIS)~\cite{70abcc1cc0244632a97f01a50f4ee449}, and computer graphics~\cite{RTR4}. One fundamental variant of range searching is that of \emph{range counting}, where given a dataset $P$, we need to preprocess it into a data structure so that for a query range $R_q$, we can efficiently estimate $|R_q\cap P|$. Primary examples of types of query ranges include rectangles, balls, halfspaces, and simplices.

In this paper, we study the special case of range counting queries for Euclidean balls of fixed radius. Solving the problem exactly seems to require space usage that is exponential to the dimension, a phenomenon known as the curse of dimensionality. Therefore, aiming for solutions that scale well with the dimension, we focus on approximate solutions where points up to $(1+\eps)r$ away from $q$ may be considered, where $\eps>0$ is an input parameter known during preprocessing. 
Formally,  given as input a dataset $P \subseteq \RR^d$, such that each point $p\in P$ is assigned a weight $w_p \in \RR$,  search radius $r>0$, and approximation parameter $\eps>0$,  the \emph{approximate spherical range counting} problem is to preprocess $P$ into a data structure that given any query point $q\in \RR^d$, it reports 
    $w_S = \sum_{p\in S} w_p$, where $S \subseteq P$ is  
    a subset such that 
    $B (q,r)\cap P\subseteq S \subseteq B(q,(1+\eps)r)\cap P$, where, for any $x\in \RR^d$, $r>0$, $B(x,r)$ denotes the Euclidean ball of radius $r$ centered at $x$.
    %$B(q,r)\cap P\subseteq S \subseteq B(q,(1+\eps)r)\cap P$. 
Since we can uniformly rescale the dataset, we assume throughout the paper that the search radius is $1$.

\subparagraph*{Related work} We juxtapose the above problem definition with a different notion of approximate counting, adapted for example by Aronov and Har-Peled~\cite{AH08}, where one is interested in approximating the number of points inside a query range. Given a query range $R_q$, the data structure shall return a value $\nu$ such that $(1-\eps)|R_q\cap P|\leq \nu\leq (1+\eps)|R_q\cap P|$. Aronov and Har-Peled~\cite{AH08} show that one can reduce such a counting
query to a sequence of ``emptiness" queries, determining if the query range is empty or not, of length polynomial in $\log |P|$ and $\eps^{-1}$. In contrast, our definition asks for an \emph{exact} aggregate over a set whose geometry is allowed to vary within the annulus $B(q,1+\eps)\setminus B(q,1)$. This definition has been used, e.g., by Arya and Mount~\cite{AM00}, although they study solutions in the low-dimensional regime.

In terms of known solutions, approximate spherical range counting in high dimensions is 
not clearly distinguished from other variants of approximate range searching. Perhaps the most promising approach for related problems in high dimensions is locality sensitive hashing (LSH) (see \cite{IM98, HIM12, AI08, ALRW17}), which maps the points to buckets with the promise that near (resp.~far) points are more (resp.~less) likely to be mapped together. 
It is not clear how one could use an LSH-based approach to solve the counting variant without paying at least the cost of exploring all points in $B(q,1)$. This suggests that, within LSH-based frameworks, counting is unlikely to admit a fundamentally more efficient solution than reporting. 
In this work, we focus on designing solutions better suited to the counting variant.

% \begin{problem}[Approximate Spherical Range Counting]
%     Given as input a dataset $P \subseteq \RR^d$, such that each point $p\in P$ is assigned a weight $w_p \in \RR$, and parameters $r>0$, $\eps>0$,  the approximate spherical range counting problem is to preprocess $P$ into a data structure that given any query point $q\in \RR^d$, it reports 
%     $w_S = \sum_{p\in S} w_p$, where $S \subseteq P$ is  
%     a subset such that $B(q,r)\cap P\subseteq S \subseteq B(q,(1+\eps)r)\cap P$. 
% \end{problem}

\subparagraph*{Our Results}
Our main contribution is to design a data structure for the approximate range counting problem in high-dimensional Euclidean spaces. For any constant approximation parameter $\eps \in (0,1)$, our data structure uses near-linear space, polynomial preprocessing time, and query time that is sublinear for any sublinear number of points in the ambiguity zone, i.e., for any sublinear $t_q = |\left(B(q,1+\eps)\setminus B(q,1)\right)\cap P |$ the query time is sublinear. 

Our approach is reminiscent of classic results for the \emph{exact} range searching problem in low-dimensional Euclidean spaces. In particular, a central notion of our approach is that of partition trees~\cite{M92,chan2010optimal}. We adapt well-studied relations between spanning trees with low stabbing\footnote{which also appears as ``crossing" in previous works} numbers and partition trees ~(see, e.g.,~\cite{CW89}) to our approximate setting. A key difference in our setting is that we use a stronger notion of stabbing; we say that a point $q$ $\eps$-stabs a set $S$ if $\exists x
\in S: \|x-q\|\leq 1$ and $\exists y
\in S: \|y-q\|\geq 1+\eps$ (classic stabbing would correspond to the case where $\eps=0$). 
An important step in our preprocessing algorithm is finding a ``light" edge, i.e., a pair of input points that is $\eps$-stabbed by a \emph{polynomial in $d$} ratio of the relevant queries.  
Then, using the multiplicative weight update (MWU) method in a similar manner as~\cite{W89,CW89,W92}, we are able to show that we can compute a spanning tree with low $\eps$-stabbing number, implying that we can build a partition tree such that for any query, we only need to visit a bounded number of nodes. To traverse the partition tree, we need to be able to decide for each internal node if the corresponding subset is $\eps$-stabbed by the query. We can only do this approximately, meaning that we may also explore nodes that are almost $\eps$-stabbed, but the overall cost of this exploration is proportional to $t_q$.

\begin{theorem}[Simplified version of  \Cref{thm:final}] Given a set of $n$ points in $\RR^d$, we can build a randomized data structure for the approximate range counting problem, with space usage in $\tilde{{O}}(n)$, preprocessing time in $O(dn)+n^{\mathsf{poly}(1/\eps)}$
%$\tilde{O}(n)^{O(\eps^{-2}\log(1/\eps))}$
and sublinear query time if $t_q$ is also sublinear.
%query time in     $\tilde{O}(dn^{1-\rho^2} + t_q^{\rho}\cdot n^{1-\rho})$ time, where $\rho\in \Theta(\eps^{2}/\log(1/\eps))$.
\end{theorem}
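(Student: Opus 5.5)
The plan follows the classic Chan--Welzl route from spanning trees with low stabbing number to partition trees, adapted to $\eps$-stabbing.

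\textbf{Step 1: a finite query set.} First I reduce the continuum of queries to a finite family $Q$ of size $n^{\poly(1/\eps)}$. To do this, apply a Johnson--Lindenstrauss projection to dimension $k=O(\eps^{-2}\log n)$; this costs $O(dn)$ time, or $O(dk)$ per query. Then take a net of candidate query points, or more precisely of the combinatorially distinct $\eps/2$-stabbing patterns, in the projected space. The number of such patterns should be bounded by $n^{O(k')}$ with $k'=\poly(1/\eps)$. I would obtain this by a second dimension reduction that only has to preserve distances in the regime $[1,1+\eps]$ with constant probability per pair, combined with a union bound over pairs. This step is responsible for the $n^{\poly(1/\eps)}$ preprocessing term.

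\textbf{Step 2: a light edge.} For any weighting of $Q$ and any subset $P'\subseteq P$ of size $m$, I show there is a pair $x,y\in P'$ that is $\eps$-stabbed by only an $m^{-\varrho}$ fraction of the weight, with $\varrho=\Theta(\eps^2)$. The argument is a packing/LSH-style averaging. Take a random partition of space, for instance ball carving or an LSH for $\ell_2$, at scale about $\eps$. A query is unlikely to $\eps$-stab a pair that lands in the same cell, because two points in one small cell cannot lie on opposite sides of the annulus around $q$ unless the cell meets the boundary sphere, and the weight of queries for which that happens is small. Averaging over the random partition and over pairs within a cell gives the existence of such a pair. Pigeonhole on $m$ points then forces some cell to contain two of them.

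\textbf{Step 3: MWU spanning tree.} Repeat for $n-1$ rounds, as in~\cite{W89,CW89}. Pick a light edge among the current component representatives, add it, and double the weight of every query in $Q$ that $\eps$-stabs it. The standard potential argument bounds the $\eps$-stabbing number of every $q\in Q$ by $\tilde O(n^{1-\varrho})$. Converting the resulting path into a partition tree with fan-out as in Chan~\cite{chan2010optimal}, and storing subtree weights, gives $\tilde O(n)$ space.

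\textbf{Step 4: queries.} At each node the query approximately decides, using a stored representative or a small sketch, whether the node's point set is fully inside, fully outside, or $\eps$-stabbed. It recurses only into nodes that are stabbed or ambiguous. The ambiguous nodes that are not truly stabbed must contain points of the annulus, so the extra work is charged to $t_q$. The multilevel analysis gives the bound $n^{1-\Theta(\eps^4/\log(1/\eps))}+t_q^{\varrho}n^{1-\varrho}$.

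\textbf{Main obstacle.} I expect Step 2 together with the transfer from $Q$ to arbitrary queries to be the hardest part. The light edge must be light with a polynomial-in-$n$ saving rather than a $1/d$ saving, and the finite family must certify stabbing for every real query up to $\eps$ slack. I would first attempt a cell-probability estimate via the LSH collision gap $p_1^{\,\ln(1/p_2)/\ln(1/p_1)}$, which yields exponents of order $\eps^2$.
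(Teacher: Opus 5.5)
Your architecture matches the paper's: JL, a finite family of snapped queries, a light-edge lemma, MWU, tree $\to$ path $\to$ partition tree, a per-node test, and level-by-level accounting. However, two load-bearing steps fail as you describe them.

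\textbf{Step~2: the pigeonhole does not go through.} A random partition of $\RR^k$ at scale $\eps$ has unboundedly many cells, so $m$ pairwise-distant points need not share any cell. Once cells have diameter below $\eps$, the discussion of queries whose sphere meets a cell is moot, because $\eps$-stabbing $\{x,y\}$ forces $\|x-y\|\geq\eps$. The real difficulty is forcing two points into a common region that no relevant query stabs. The missing idea is to make the partition depend on a small sample of the queries.
\begin{enumerate}[label=(\roman*)]
\item The paper draws a $\delta$-net $N$ of the weighted query multiset for the ranges $S_{x,y}$. These have VC dimension $O(d)$, since membership is decided by the signs of four quadratics. With $\delta=d/n^{\rho}$ one gets $|N|=\tilde O(n^{\rho})$, and it suffices to find a pair stabbed by \emph{no} point of $N$.
\item It applies a terminal embedding with terminals $N$ into dimension $d'=O(\eps^{-2}\log|N|)$.
\item It keeps only the grid cells of side $\eps/(4\sqrt{d'})$ that meet the $(1+\eps)$-balls around $f(N)$. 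There are $|N|^{O(\eps^{-2}\log(1/\eps))}<n-2$ of them, and all points landing outside them form one extra bucket whose pairs no net query can stab.
\end{enumerate}
Now pigeonhole applies and gives $\rho=\Theta(\eps^2/\log(1/\eps))$. The JL dimension $O(\eps^{-2}\log n)$ alone is far too large for this count, since $O(1/\eps)^{O(\eps^{-2}\log n)}\gg n$. That is why the second embedding, relative only to the small net, is essential.

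\textbf{Step~4: the per-node test has no mechanism behind it.} A stored representative or constant-size summary cannot certify ``covered'' or ``disjoint''. Those answers require approximate nearest- and farthest-point information about $P_v$ relative to $q$. A bounding-ball test would also fail: it visits nodes that are neither stabbed nor meet the annulus, which breaks the charging to the visiting number. The paper instead stores at each internal node the structure of \Cref{thminternalnode}. It does approximate near- and far-neighbor search via an LSH-based embedding into a Hamming cube of dimension about $\log|P_v|$, followed by bucket enumeration. It uses $\tilde O(d|P_v|)$ space and has $|P_v|^{1-\Theta(\eps^2)}$ query time. Near-linear space per level is what gives $\tilde O(n)$ total space. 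The per-node exponent $\beta=1-\Theta(\eps^2)$, plugged into the per-level cost $\min\{2^i,n^{\alpha}+t_q\}(n/2^i)^{\beta}$, is exactly what produces $n^{1-\Theta(\eps^4/\log(1/\eps))}+t_q^{\varrho}n^{1-\varrho}$. Without it, your multilevel bound is only asserted.

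\textbf{Two smaller points.} In the $\eps$-setting, the DFS conversion of the tree into a path can create stabbed shortcut edges even when no tree edge is stabbed, because the tree path may pass through the annulus; this costs an extra $2t_q$. The second dimension reduction in Step~1 is unnecessary, and a union bound over pairs with only constant per-pair success probability would not work anyway. Snapping queries to a grid of side $\eps/(10\sqrt{k})$ after JL already yields $n^{O(\eps^{-2}\log(1/\eps))}$ candidate queries, with $\eps$ replaced by $\eps/2$ to absorb the snapping error.
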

Our data structure is the first to achieve space usage that is near-linear in $n$ and polynomial in $d$, and query time that is sublinear in $n$ and polynomial in $d$ for any sublinear $t_q$. In \Cref{table}, we juxtapose our result with previous known results that even implicitly solve the same problem. While LSH is originally proposed for the approximate near neighbor problem, where the data structure is only required to report one witness point within distance $1+\eps$, Ahle et al.~\cite{AAP17} proposed an LSH-based solution for the problem of reporting all points. Their data structure is adaptive; while building time and space can be that of an LSH table prepared for worst-case inputs (parameter $\rho$ in \Cref{table}), query time is that of the best LSH parameters for the query/instance (parameter $b$).  
Moreover, LSH tuned for near-linear space~\cite{ALRW17} can solve the reporting problem with only a logarithmic overhead in space and query time, since we can amplify the probability that every near neighbor will eventually be encountered. Since both LSH-based methods can solve the reporting variant, they can also solve the counting problem. However, it is not clear whether one can modify such an approach to avoid looking into all near points. 
Although our result does not guarantee sublinear query time in the worst case, it advances prior space-efficient work in high dimensions by making the complexity depend on the number of points in the annulus rather than on all points inside the entire ball. 
Trivially, the bucketing method, when combined with a Johnson-Lindenstrauss dimension reduction (as in \cite{HIM12}), can solve the counting problem; after reducing the dimension, one can store the relevant information in grid cells near the dataset. Such methods that discretize the query space, while providing efficient query times, unavoidably have huge space requirements. Finally, when the dimension is constant, partition trees~\cite{chan2010optimal} solve the problem exactly, while BBD-trees~\cite{AM00} leverage improved query times for approximate solutions (although still exponential in $d$).

In addition to our worst-case guarantees, we present an algorithm inspired by learning theory. Specifically, this algorithm is query-driven, i.e.,~it receives a number of query samples and aims to construct the data structure that yields the best expected query runtime for the distribution of the queries. This way, we design a much faster preprocessing algorithm (from $n^{\mathsf{poly}(1/\eps)}$, the complexity falls to $n^{O(1)}$), which gives us similar bounds for the average case.

Our approach is motivated by the viewpoint of analyzing algorithms through the lens of statistical learning (e.g.,~\cite{GR17, DBLP:journals/jacm/BalcanDDKSV24}). While this has been influential in optimization and mechanism design, to the best of our knowledge, a similar treatment for data structures is absent. Although related ideas appear in learned index structures~\cite{LLZSC25}, LSH trees whose cutting rules are optimized with respect to the given dataset~\cite{AB22},  learning-augmented binary trees with optimized search assuming access to advice from a frequency estimation oracle~\cite{LLW22}, these works do not directly cast the data structure construction itself as a PAC learning problem.
We also remark that the related notions of self-improving~\cite{ACCLMS11} and instance-optimal algorithms~\cite{ABC17} address different guarantees. 
A simplified version of our result follows:

\begin{theorem}[Simplified version of  \Cref{thm:samplingcomplexity}]
    Given a set of $n$ points in $\RR^d$ and a sample of $O(nd\log n)$ queries from a query distribution $\cD_Q$, we can compute a partition tree $\tilde{\ptree}$ with near-optimal expected visiting number for $\cD_Q$ in $O(n^3 d\log n)$ time with high probability. 
\end{theorem}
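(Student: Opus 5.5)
We treat the statement as a PAC-style learning problem over a finite hypothesis class, followed by an efficient empirical risk minimizer. For a fixed partition tree $\ptree$ on $P$ and a query $q$, let $V_\ptree(q)$ be the visiting number, i.e.\ the number of nodes explored when answering $q$. The target is $\tilde{\ptree}$ with $\EE_{q\sim\cD_Q}[V_{\tilde\ptree}(q)] \le c\cdot \min_{\ptree}\EE_{q\sim\cD_Q}[V_{\ptree}(q)] + \text{(small additive term)}$. The plan has three steps: restrict to a structured class, prove uniform convergence over it from $m=O(nd\log n)$ samples, and run a fast approximate ERM on the samples.

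\textbf{Step 1: reduce to spanning trees and $\eps$-stabbing.} As in the worst-case construction (spanning tree with low $\eps$-stabbing number $\Rightarrow$ partition tree), restrict to partition trees obtained from spanning trees (or paths) $T$ on $P$ by the standard balanced decomposition. Then $V_\ptree(q)$ is bounded, up to a $\log n$ factor, by the number of edges of $T$ that $q$ $\eps$-stabs, plus the ambiguity-zone contribution, which does not depend on $T$. So it suffices to learn $T$ minimizing $\EE_q[\#\{e\in T: q \text{ $\eps$-stabs } e\}] = \sum_{e\in T}\Pr_q[q \text{ stabs } e]$. This objective is linear in the edges.

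\textbf{Step 2: sample complexity.} For each of the $\binom n2$ candidate edges $e$, the stabbing indicator is a Bernoulli variable. We estimate $\hat p_e$ from $m$ samples. A Chernoff bound with relative-plus-additive error, e.g.\ $|\hat p_e-p_e|\le \tfrac12 p_e + O(\log n/m)$, together with a union bound over $n^2$ edges, gives simultaneous accuracy for all edges with high probability. Summing over the $n-1$ edges of any tree, every $T$'s empirical cost is within a constant factor of its true cost, plus an additive $O(n\log n/m)=O(1/d)$, which is negligible for $m=O(nd\log n)$. The factor $d$ leaves room for a union bound over a VC-type cover if we instead bound the class of stabbing regions, which are unions and differences of balls in $\RR^d$ and have VC dimension $O(d)$. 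I would keep the simpler per-edge union bound and use the $d$ only as slack.

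\textbf{Step 3: computation and the main obstacle.} Computing all $\hat p_e$ costs $O(n^2 m d)$ naively, which is $O(n^3d^2\log n)$, a factor $d$ too slow. To fix this, for each sample $q$ compute the distances $\|q-p\|$ for all $p$ once, in $O(nd)$ time. An edge $\{x,y\}$ is then stabbed iff one endpoint is within distance $1$ and the other is beyond $1+\eps$. Classifying the points into near/far/middle and adding counts over near$\times$far pairs costs $O(n^2)$ per sample, so $O(mn^2)=O(n^3d\log n)$ in total. Then run a minimum spanning tree with weights $\hat p_e$ in $O(n^2)$ time and build the partition tree.

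The main obstacle is the gap between the additive objective minimized by the MST and the true visiting number. The balanced partition-tree construction requires low stabbing on every level, not just a low total. I would handle this by applying the MST/MWU reweighting recursively on the induced subproblems, which keeps the $n^3$ budget because the sizes shrink geometrically. "Near-optimal" would then be understood as up to $O(\log n)$ factors relative to the best tree.
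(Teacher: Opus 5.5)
Your sampling argument and algorithm (empirical edge weights, minimum spanning tree, DFS path, balanced binary tree over the path) are essentially the paper's. However, one direction of the argument is missing: you never compare against \emph{arbitrary} partition trees. Step~1 only gives the upper bound $\zeta(q,\ptree)\le 2[2\sigma(q,\stree)+3t_q]\log n+1$ for the tree you build from $\stree$. Minimizing the expected $\eps$-stabbing number over spanning trees therefore yields near-optimality only within the class of trees you construct, so ``it suffices to learn $\stree$'' is not justified. The paper closes this with the converse, \Cref{lem:ubstabbing}. Order the leaves of any partition tree $\ptree$ left to right to get a spanning path $\ppath$. If the edge $\{x_i,x_{i+1}\}$ is $\eps$-stabbed by $q$, then the set of the nearest common ancestor $v$ of $l_i,l_{i+1}$ is $\eps$-stabbed, so the child of $v$ containing $l_i$ is visited. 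Since $l_i$ is the rightmost leaf of that child, no node is charged twice. Hence $\sigma(q,\ppath)\le\zeta(q,\ptree)$ for every $q$, and so $\min_{\stree}\EE_q[\sigma(q,\stree)]\le\min_{\ptree}\EE_q[\zeta(q,\ptree)]$. This is exactly what you need to chain your empirical-minimizer inequality up to $\min_{\ptree}$.

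Second, the ``main obstacle'' you describe does not exist, and you should drop the recursive MST/MWU repair. As stated it is unanalyzed: you give no uniform-convergence or optimality argument for the subproblems, and the resulting tree no longer comes from a single path. The path-to-tree bound is pointwise and linear in $\sigma(q,\ppath)$. Each level of the complete binary tree partitions $\ppath$ into contiguous blocks. A block that is $\eps$-stabbed or meets the annulus must contain an $\eps$-stabbed path edge or an annulus point, and each such edge or point lies in at most one block per level. So $\zeta(q,\ptree)\le 2[\sigma(q,\ppath)+t_q]\log n+1$ for every $q$. By linearity of expectation, the additive objective your MST minimizes is precisely what controls $\EE_q[\zeta]$, up to the $\log n$ factor you already accept.

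With these two fixes the argument goes through, and two of your choices are genuine simplifications. First, the function class is indexed by the finitely many pairs of $P$. So a per-edge Chernoff/Bernstein bound with a union bound over $O(n^2)$ pairs can replace the paper's VC-based relative approximation, and it already gives $m=O(n\log(n/\delta))$; the factor $d$ is indeed slack. Second, precomputing the $mn$ query-to-point distances justifies the $O(n^2m)$ weight-computation time. The paper states that time without accounting for the $O(d)$ cost of each stabbing test. Your bound is $O(mnd+mn^2)$, which is $O(n^3d\log n)$ when $d\le n$.
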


We believe this is a neat way to use tools from other areas to design efficient algorithms for difficult problems or problems with only existence guarantees, and it would be interesting to see what other data-structure problems might benefit from this viewpoint.

%and encourage the reader to explore this learning theory path in their own research.

\subparagraph*{Organization} In \Cref{sec:prel} we present notation and useful known results and prove auxiliary lemmas that will be used later. In \Cref{section:stabbing} we analyze a data structure that supports approximate stabbing queries. Subsequently, in \Cref{sec:spatreestab} we prove that a spanning tree with sublinear $\eps$-stabbing number exists and can be computed in polynomial time, which we then use to prove that a partition tree with sublinear query time can be efficiently constructed in \Cref{sec:sphrancou}. Finally, in \Cref{sec:datadr} we present our data-driven algorithm for building a partition tree with near-optimal expected visiting number.

\begin{table}[h]
\caption{Existing solutions for the approximate range counting problem. $k$ denotes the number of points within distance $1+\eps$ from the query, whereas $t\leq k$ denotes the number of points at distance between $1$ and $1+\eps$ from the query. 
Adaptive LSH uses a worst-case choice of LSH exponent $\rho$, but answers queries with an instance-best LSH exponent $b$. 
}\label{table}
\centering
\begin{edtable}{tabular}{cccc}
\hline
\textbf{Method} & \textbf{Space} & \textbf{Prep.~time} & \textbf{Query time} \\\hline%& \textbf{Comment} \\\hline
LSH~\cite{ALRW17} & $n^{1+o(1)}$ & $n^{1+o(1)}$ & $n^{1-\Theta(\eps^2)+o(1)}+k n^{o(1)}$ \\%& {tuned for space efficiency} \\ 
Adaptive LSH~\cite{AAP17} & $\tilde{O}(n^{1+\rho})$ & $\tilde{O}(n^{1+\rho})$ & $O(k(n/k)^{b})$ \\%& adaptive $b\leq \rho =\rho(\eps)<1$\\
Bucketing + JL~\cite{HIM12} & $n^{\poly(1/\eps)}$ & $n^{\poly(1/\eps)}$ & $O(\eps^{-2}\log n)$ \\%& -- \\
BBD-Trees~\cite{AM00} & $O(dn)$ & $O(dn\log n)$ & $O(\log n)+O(1/\eps)^{d-1}$ \\%& various range spaces \\
Partition trees~\cite{chan2010optimal} & $O(dn)$ & $O(dn)$ & $O(n^{1-1/d})$ \\%& various range spaces \\
\Cref{thm:samplingcomplexity} & $\tilde{O}(n)$ & $n^{\poly(1/\eps)}$ & $ n^{1-\Theta(\eps^4/\log(1/\eps))}$\newline$+t^{\Theta(\eps^2)}\cdot n^{1-\Theta(\eps^2)}$  \\%& $\rho\in \Theta(\eps^{2}/\log(1/\eps))$ \\
\hline
\end{edtable}
\end{table}

\section{Preliminaries}\label{sec:prel}
%A function $f(x_1,\ldots,x_{\ell})$ is in $\poly(x_1,\ldots,x_{\ell})$ if there exist $u_1,\ldots,u_{\ell}$ such that $f(x_1,\ldots,x_{\ell}) \in O\left(\prod_{i=1}^{\ell} x_i^{u_i}\right)$.  
We use the $\tilde{O}$ notation to hide the dependence on $\poly(\log n,\log(1/\eps))$. 
For any $\delta \in \RR_{>0}$, let  $\cG_{\delta}^d=\{\delta \cdot v \mid v\in \ZZ^d\}$ be the grid of side length $\delta$. Unless stated otherwise, $\|\cdot\|$ denotes the Euclidean norm $\|\cdot\|_2$. For any $p\in \RR^d$ and $r>0$, $B(p,r)$ denotes the Euclidean ball centered at $p$, i.e., $B(p,r)=\{x\in \RR^d \mid \|x-p \|\leq r\}$.

We say that a set of points $P$ is  \emph{$(\eps,r)$-stabbed} by a point $p$, if and only if there exists a point $x\in P$ such that $ \|x-p\|\geq (1+\eps)r$ and there exists a point $y\in P$ such that $ \|y-p\|\leq r$. We say that a pair of points $x,y$ (or an edge formed by them) is $(\eps,r)$-stabbed by $p$, if and only if $\{x,y\}$ is $(\eps,r)$-stabbed by $p$. For brevity, when $r=1$, we say that a pair $x,y$ is $\eps$-stabbed by $p$ if and only if it is $(\eps,1)$-stabbed by $p$. Given some $P\subset \RR^d$ and an edge set $G$ (if $G$ is a graph, we refer to its edges), $\sigma(q,G)$ denotes the \emph{$\eps$-stabbing number} of $q$ on $G$, i.e., the number of edges in $G$ that are $\eps$-stabbed by $q$.

A  \emph{partition tree} $\ptree$ built on a set of points $P$ is a tree where each node $v$ is associated with a subset $P_v \subseteq P$, and for each node $v$ with children $u_1,\ldots,u_{\ell}$, we have $P_v=\bigcup_i P_{u_i}$ and $\forall i\neq j~P_{u_i} \cap P_{u_j} = \emptyset$. %We say that a query range $q$ stabs a set $S$ if $S \cap q \neq \emptyset$ and $S \cap q \neq S$. 
For a partition tree $\ptree$, and a query $q$, the \emph{visiting number} of $\ptree$ w.r.t.~$q$, denoted by $\zeta(q,\ptree)$ is the number of nodes $v$ in $\ptree$ 
which are either root of $\ptree$ or their parent $u$
satisfies one of the following:
\begin{itemize}
    \item $P_u$ is $\eps$-stabbed by $q$, or
    \item $P_u\cap \left(B(q,1+\eps) \setminus B(q,1) \right)\neq \emptyset$ and $P_u \subseteq B(q,1+\eps)$, or
    \item $P_u\cap \left(B(q,1+\eps) \setminus B(q,1) \right)\neq \emptyset$ and $P_u \cap B(q,1)=\emptyset$.
\end{itemize} 
Intuitively, the visiting number is an upper-bound estimate of the number of nodes that will be visited to answer a query approximately.

\subsection{Range spaces}
Fix a countably infinite domain set $X$.  The so-called Vapnik-Chernovenkis dimension~\cite{Sau72, She72, VC71} (VC dimension) of a set $\cF$  
of functions from $X$ to $\{0, 1\}$, denoted by $VCdim(\cF)$, is
the largest $d$ such that there is a set $S=\{x_1, \ldots,x_d\}\subseteq X$, for which, for any $s\in 2^S$ there is a function $f\in \cF$ such that $p\in s \iff f(p)=1$. 
A \emph{range space} is defined as a 
pair $(X,\cF)$ where $X$ is the domain and $\cF$ is a set of functions from $X$ to $\{0, 1\}$, or equivalently a 
pair of sets $(X,\cR)$, where $X$ is the domain and $\cR$ is a set of ranges, i.e., subsets of $X$. The dual range space of a range space defined by sets $(X,\cR)$ is a range space defined by sets $(\cR, \cX)$, where $\cX = \{\{R\in \cR \mid x\in R\} \mid x\in X\}$. The dual VC dimension of a range space is the VC dimension of its dual. A set $N\subseteq X$ is an $\eps$-net if
$\forall R  \in \mathcal{R}: 
\; |R\cap X| \ge \varepsilon |X| \;\Rightarrow\; N \cap R \neq \emptyset$. 
\begin{theorem}[$\eps$-net theorem~\cite{HW87}]
\label{thmepsnet}
Let $(X,\mathcal{R})$ be a range space of VC-dimension $D < \infty$,  
where $X$ is a finite set.  
 There exists a constant $C>0$ such that for any $\eps \in (0,1)$ if we sample a set of $m$ points 
(independent picks with uniform probability) from $X$, where 
\[
m \;\ge\; 
C \,\frac{1}{\varepsilon}
\Bigl(
D \,\log \tfrac{1}{\varepsilon}
\;+\;
\log \tfrac{1}{\delta}
\Bigr),
\]
then, with probability at least $1-\delta$, the resulting set is an $\eps$-net for $(X,\mathcal{R})$.
\end{theorem}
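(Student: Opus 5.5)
The plan is the classical double-sampling (symmetrization) argument of Haussler and Welzl, combined with the Sauer--Shelah lemma. Let $N=(x_1,\dots,x_m)$ be the sample, drawn independently and uniformly from $X$. Call a range $R\in\mathcal{R}$ \emph{heavy} if $|R\cap X|\ge \varepsilon|X|$. Let $E_1$ be the bad event that some heavy $R$ satisfies $R\cap N=\emptyset$; the goal is $\Pr[E_1]\le\delta$. Draw a second independent sample $T=(y_1,\dots,y_m)$ of the same size. Let $E_2$ be the event that some heavy $R$ satisfies $R\cap N=\emptyset$ and $|R\cap T|\ge \varepsilon m/2$, counting with multiplicity.

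\textbf{Step 1: $\Pr[E_2]\ge \Pr[E_1]/2$.} Condition on $N$ with $E_1$ holding, and fix one heavy range $R$ missed by $N$. Then $|R\cap T|$ is a binomial variable with mean at least $\varepsilon m$. By Chebyshev, or a Chernoff bound, $\Pr[|R\cap T|<\varepsilon m/2]\le 4/(\varepsilon m)\le 1/2$ as soon as $m\ge 8/\varepsilon$. This holds under the stated lower bound on $m$, after enlarging $C$ if necessary. Averaging over $N$ gives the claim, so it suffices to prove $\Pr[E_2]\le \delta/2$.

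\textbf{Step 2: bound $\Pr[E_2]$ by a random split.} Generate $(N,T)$ by first drawing $2m$ points $z_1,\dots,z_{2m}$ and then applying a uniformly random permutation that assigns $m$ of them to $N$ and the rest to $T$. This yields the same joint distribution. Now condition on the multiset $Z=\{z_1,\dots,z_{2m}\}$. The ranges only matter through their traces $R\cap Z$. The trace space $(Z,\{R\cap Z\})$ has VC dimension at most $D$, so by Sauer--Shelah it has at most $\sum_{i\le D}\binom{2m}{i}\le (2em/D)^D$ distinct traces.

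Fix one trace containing $k\ge \varepsilon m/2$ of the $2m$ positions. The probability that all $k$ of them land in $T$ is
\[
\binom{2m-k}{m}\Big/\binom{2m}{m}\;\le\; 2^{-k}\;\le\; 2^{-\varepsilon m/2}.
\]
A union bound over the traces then gives
\[
\Pr[E_2]\;\le\;(2em/D)^D\,2^{-\varepsilon m/2}.
\]

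\textbf{Step 3: the arithmetic, which I expect to be the main obstacle.} It remains to check that
\[
m\;\ge\; \frac{C}{\varepsilon}\Bigl(D\log\tfrac1\varepsilon+\log\tfrac1\delta\Bigr)
\]
implies
\[
D\log_2\!\bigl(2em/D\bigr)+\log_2\tfrac{2}{\delta}\;\le\;\frac{\varepsilon m}{2}.
\]
The difficulty is that $m$ appears on both sides. I would handle it as follows.
\begin{itemize}
\item The $\delta$-term is directly at most $\varepsilon m/4$ for large $C$.
\item For the $D$-term, write $m=(D/\varepsilon)\,t$, where $t\ge C\log(1/\varepsilon)$ in the regime where $D\log(1/\varepsilon)$ dominates. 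The requirement then becomes $\log_2(2et/\varepsilon)\le t/4$.
\item Since $\log(2e/\varepsilon)\le O(\log(1/\varepsilon))$ is absorbed by $t/8$ when $C$ is large, it suffices that $\log_2 t\le t/8$. This holds for all $t\ge C$, again for $C$ large enough.
\item The case where $\log(1/\delta)$ dominates is handled the same way, because increasing $m$ only helps: the map $m\mapsto \varepsilon m/2-D\log_2(2em/D)$ is increasing for $m$ beyond $O(D/\varepsilon)$.
\end{itemize}
Combining Steps 1--3 gives $\Pr[E_1]\le 2\Pr[E_2]\le\delta$, so with probability at least $1-\delta$ the sample is an $\varepsilon$-net.
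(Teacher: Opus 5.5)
The paper gives no proof of this theorem; it simply cites Haussler and Welzl. Your argument is the standard double-sampling proof behind that result, and Steps 1--3 are correct. Using the sharp Sauer--Shelah bound $(2em/D)^D$ is what makes $D$ cancel after substituting $m=(D/\varepsilon)t$, which gives $\log\frac{1}{\varepsilon}$ rather than $\log\frac{D}{\varepsilon}$.

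There is one caveat. Your claims that $m\ge 8/\varepsilon$ follows ``after enlarging $C$'' (Step 1) and that $\log\frac{2e}{\varepsilon}=O(\log\frac{1}{\varepsilon})$ (Step 3) both need $\varepsilon$ bounded away from $1$. This is a defect of the statement, not of your proof. Take the ranges $\{X\setminus\{x\}: x\in X\}$, which have VC-dimension $1$ and are all heavy once $|X|\ge 1/(1-\varepsilon)$. For $\varepsilon,\delta$ close enough to $1$ the stated bound permits $m=1$, yet no single point is an $\varepsilon$-net. So read the bound with $\log\frac{2}{\varepsilon}$ in place of $\log\frac{1}{\varepsilon}$ (or assume $\varepsilon\le 1/2$); under that reading your proof goes through as written.
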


For any $S =(x_1,\ldots,x_m) \in X^m$ let $\mu_{f}(S)=m^{-1}\cdot \sum_{i=1}^m f(x_i)$. For any $\eta>0, r\geq 0,s\geq 0$, let  $d_{\eta}(r,s) = \frac{|r-s|}{\eta+r+s}$. By properly setting $\eps$ in the following theorem from 
\cite{10.5555/338219.338267}, we obtain \Cref{theo:etaepsilonapproxfunctionsrelative}. 

\begin{theorem}[Restated from {\cite{10.5555/338219.338267}}]
\label{theo:etaepsilonapproxfunctions}
    Let $X$ be a countably infinite domain. Let $\cF$ be a set of functions from $X$ to $\{0,1\}$ with VC dimension $D$ and let $\cD$ be a probability distribution over $X$. Also, let $\eps, \delta, \eta \in (0,1)$. There is a universal constant $c$ such that if $m \geq \frac{c}{\eta \cdot \eps^2} \cdot \left( D \log \left(\frac{1}{\eta}\right) + \log\left(\frac{1}{\delta}\right) \right)$
    then if we sample $S = \{x_1,\ldots,x_m\}$  independently at
random according to $\cD$, then with probability at least $1 - \delta$, for any $f\in \cF$, 
    $d_{\eta}\left( \mu_f(S), \EE_{p\sim\cD}[f(p)] \right)\leq \eps.$ 
\end{theorem}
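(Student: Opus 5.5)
Since this statement is restated from the literature, the natural route is the classical symmetrization argument for relative deviations (Pollard, then Haussler), refined as in \cite{10.5555/338219.338267} to remove a $\log(1/\eps)$ factor. We first work with the metric $d_\eta$ directly. It satisfies a triangle inequality and is monotone in the right sense: if $d_\eta(r,s)>\eps$, then $r$ and $s$ differ by at least $\eps(\eta+r+s)$ in absolute value. Write $p_f=\EE_{x\sim\cD}[f(x)]$ and call $f$ \emph{bad} for $S$ if $d_\eta(\mu_f(S),p_f)>\eps$. The goal is to bound $\Pr_S[\exists f\in\cF \text{ bad}]\le\delta$.

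\emph{Step 1 (ghost sample).} Draw a second independent sample $S'$ of size $m$. Fix a bad $f$ for $S$. Since $m\ge c/(\eta\eps^2)$, Chebyshev's inequality applied to the binomial $m\mu_f(S')$ gives, with probability at least $1/2$, $d_\eta(\mu_f(S'),p_f)\le\eps/2$; the variance $p_f/m$ is small relative to $(\eps(\eta+p_f))^2$. By the triangle-type property of $d_\eta$, this yields $d_\eta(\mu_f(S),\mu_f(S'))>\eps/4$, up to constants. Hence
\[
\Pr[\exists f \text{ bad}]\le 2\Pr\bigl[\exists f: d_\eta(\mu_f(S),\mu_f(S'))>\eps/4\bigr].
\]

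\emph{Step 2 (swapping and projection).} Condition on the multiset $S\cup S'$ of $2m$ points and apply a uniformly random swap of each pair $(x_i,x_i')$. By Sauer's lemma, $\cF$ restricted to these $2m$ points has at most $(2em/D)^D$ patterns. For a fixed pattern with $k$ ones in total, a Hoeffding-type (hypergeometric) bound on the split shows that the probability of the event $|\mu_f(S)-\mu_f(S')|>\tfrac{\eps}{4}(\eta+\mu_f(S)+\mu_f(S'))$ is at most $\exp(-c'\eps^2\eta m)$. The point is that the deviation threshold scales with $\eta+k/m$ while the variance scales with $k/m$, so the ratio is at least of order $\eps^2 \eta m$. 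A union bound gives failure probability $2(2em/D)^D e^{-c'\eps^2\eta m}$. Taking $m$ as in the statement makes this at most $\delta$, except that the crude union bound produces $D\log(1/(\eta\eps))$ instead of $D\log(1/\eta)$.

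\emph{Step 3 (the main obstacle: removing the $\log(1/\eps)$).} To obtain exactly $D\log(1/\eta)$ I would replace the naive union bound with a chaining or packing argument on $d_\eta$. Cover the projected class at scale roughly $\eta$ in the normalized $L_1$ metric on the $2m$ points. Haussler's packing bound gives such a cover of size $(C/\eta)^{D}$, independent of $m$ and $\eps$. Then control the deviations within each cover cell by a further chaining sum over dyadic scales, each scale contributing a geometrically decreasing failure probability. This is the delicate step where the constants and the interplay with the $\eta+r+s$ normalization must be handled carefully. I expect it to follow the proof of \cite{10.5555/338219.338267} closely, and since the paper only restates the result, one may simply invoke it. Combining Steps 1--3 yields the claimed sample bound.
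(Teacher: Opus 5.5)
The paper gives no proof of this statement. It imports it from \cite{10.5555/338219.338267} and uses it as a black box; the only thing derived from it is the corollary obtained by setting $\eps=1/9$. So your closing remark, that one may simply invoke the reference, is exactly how the paper treats it.

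Your Steps 1 and 2 correctly reproduce the Pollard--Haussler symmetrization argument:
\begin{itemize}
\item Chebyshev works because $p_f/(m\eps^2(\eta+p_f)^2)\le 1/(m\eps^2\eta)$.
\item The metric property of $d_\eta$ gives $d_\eta(\mu_f(S),\mu_f(S'))>\eps/2$.
\item The swap bound $2e^{-\Omega(\eps^2\eta m)}$ follows from Hoeffding for the $\pm1$ sum together with $(\eta m+k)^2\ge 4\eta m k$.
\end{itemize}
These two steps, however, only give the older bound with $D\log\frac{1}{\eta\eps}$. The improvement that constitutes the cited theorem lies entirely in Step 3, which you describe but do not prove. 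Your write-up therefore rests on the citation, just as the paper does.

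Your plan for Step 3 does close. The ``interplay with the $\eta+r+s$ normalization'' you flag is handled by using the relative bound only at the top level:
\begin{itemize}
\item Take $2^{-j}\eta$-covers $G_j$ of $\cF$ restricted to $S\cup S'$, in normalized Hamming distance, down to scale below $1/(2m)$. By Haussler's packing lemma, $|G_j|\le (C2^j/\eta)^D$.
\item Apply your Step 2 estimate, with threshold $\frac{\eps}{8}(\eta m+k)$, only to $G_0$.
\item The links between consecutive levels are differences supported on $O(2^{-j}\eta m)$ of the $2m$ points. For these, use plain Hoeffding with \emph{additive} thresholds $t_j\asymp \eps\eta m\sqrt{j}\,2^{-j/2}$.
\end{itemize}
For a suitable implicit constant, the $t_j$ sum to at most $\frac{\eps}{8}\eta m$. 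This suffices because the target threshold is $\frac{\eps}{2}(\eta m+k_f)$, and $k$ changes by at most $2\eta m$ when passing from $f$ to its $G_0$-approximant. Each link's exponent is $\Omega(j\eps^2\eta m)$, which beats the $O(D(j+\log\frac{C}{\eta}))$ logarithm of the number of links at level $j$.

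This argument gives $D\log\frac{C}{\eta}$, not literally $D\log\frac{1}{\eta}$, and this is unavoidable. Take $\cD$ uniform on $D$ shattered points and $f$ the indicator of the unsampled ones. Then $d_\eta(\mu_f(S),\EE_{p\sim\cD}[f(p)])>1/3$ whenever $m\le D/2$. Yet for fixed $\eps,\delta$, the stated threshold on $m$ becomes independent of $D$ once $D\log\frac{1}{\eta}\le 1$. So the statement must be read with $\eta$ bounded away from $1$. This is harmless for the paper, which only applies it with $\eta=1/n$.
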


A result similar to \Cref{theo:etaepsilonapproxfunctionsrelative} is obtained in \cite{HS11}. Due to minor differences in the statement, we include a proof (diverted to \Cref{app1}).

\begin{restatable}{corollary}{corapproxfunctions}
\label{theo:etaepsilonapproxfunctionsrelative}
    Let $X$ be a countably infinite domain. Let $\cF$ be a set of functions from $X$ to $\{0,1\}$ with VC dimension $D$ and let $\cD$ be a probability distribution over $X$. Also, let $ \delta, \eta \in (0,1)$. There is a universal constant $c$ such that if $m \geq \frac{c}{\eta } \cdot \left( D \log \left(\frac{1}{\eta}\right) + \log\left(\frac{1}{\delta}\right) \right)$
    then if we sample $S= \{x_1,\ldots,x_m\}$  independently at
random according to $\cD$, then with probability at least $1 - \delta$, for any $f\in \cF$, 
\[
 \left\lvert   \mu_f(S)-\EE_{p\sim\cD}[f(p)]\right\rvert \leq (3/8) \cdot \max\{\EE_{p\sim\cD}[f(p)] , \eta\}.
\]
\end{restatable}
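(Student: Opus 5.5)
We obtain the corollary from \Cref{theo:etaepsilonapproxfunctions} by fixing the accuracy parameter of that theorem to a constant, $\eps_0 = 1/8$. We then convert the bound on the distance $d_\eta$ into an additive bound relative to $\max\{\EE[f],\eta\}$. With $\eps_0$ constant, the sample-size requirement $\frac{c}{\eta\eps_0^2}\left(D\log(1/\eta)+\log(1/\delta)\right)$ becomes $\frac{64c}{\eta}\left(D\log(1/\eta)+\log(1/\delta)\right)$. Absorbing $64$ into the universal constant gives exactly the hypothesis of the corollary. So, with probability at least $1-\delta$, simultaneously for all $f\in\cF$, writing $a=\mu_f(S)$ and $b=\EE_{p\sim\cD}[f(p)]$, we have
\[
|a-b| \le \tfrac18\,(\eta + a + b).
\]

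The remaining work is purely algebraic and deterministic. First we remove $a$ from the right-hand side. From the inequality, $a \le b + \frac18(\eta+a+b)$, so $\frac78 a \le \frac98 b + \frac18\eta$, that is, $a \le \frac{9b+\eta}{7}$. Substituting back gives
\[
|a-b| \le \tfrac18\Bigl(\eta + b + \tfrac{9b+\eta}{7}\Bigr) = \tfrac18\cdot\tfrac{8\eta+16b}{7} = \tfrac{\eta+2b}{7}.
\]

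Next we bound the right-hand side by $\frac37 M$, where $M=\max\{b,\eta\}$. Since $\eta\le M$ and $b\le M$, we get $\frac{\eta+2b}{7}\le \frac{3M}{7}$. However, $\frac37 > \frac38$, so $\eps_0 = 1/8$ is slightly too weak. The main (minor) obstacle is therefore choosing $\eps_0$ small enough to reach the constant $3/8$.

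Redoing the computation with a general $\eps_0$: from $a(1-\eps_0) \le (1+\eps_0)b + \eps_0\eta$ we get
\[
|a-b| \le \frac{\eps_0}{1-\eps_0}\,(\eta + 2b) \le \frac{3\eps_0}{1-\eps_0}\,M.
\]
Taking $\eps_0 = 1/9$ gives $\frac{3/9}{8/9} = \frac38$, exactly the required constant. The sample-size requirement then becomes $\frac{81c}{\eta}(\cdots)$, and again the $81$ is absorbed into the universal constant.

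The proof thus reduces to invoking \Cref{theo:etaepsilonapproxfunctions} with $\eps=1/9$ and carrying out the elementary manipulation above. No step poses real difficulty beyond tracking the constants.
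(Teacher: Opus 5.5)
Your proof is correct and follows the paper's route: invoke \Cref{theo:etaepsilonapproxfunctions} with $\eps = 1/9$, absorb the factor $81$ into the universal constant, and then eliminate $\mu_f(S)$ from the right-hand side. The only difference is presentational. Your single bound $|a-b|\le \frac{\eps_0}{1-\eps_0}(\eta+2b)$ replaces the paper's four-way case split on whether $\EE[f]\le\eta$ and whether $\EE[f]\ge\mu_f(S)$, and it yields the same constant $3/8$.
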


A key requirement in building our data structures is to be able to efficiently sample from a multiset, which can also be updated. Note that a multiset can be seen as a weighted set where weights correspond to multiplicities. In the following lemma, we show that one can maintain a multiset in a data structure that supports sampling and updates on the multiplicities. The proof is diverted to \Cref{app2}.  
\begin{restatable}{lemma}{lemmamultisampling}[Multiset sampling data structure]
\label{lemma:multisampling}
Let $P=\{p_1,\ldots,p_n\}\subset \RR^d$, and 
 let $\eps\in (0,1)$. 
 Let $\Gamma=\{\gamma_1,\ldots,\gamma_n\} \subset \NN$ be weights associated with the points of $P$.  
 There is an algorithm that preprocesses $\Gamma, P$ into a data structure, in $O(n\log n)$ time, using $O(n\log n)$ space, which supports the following operations:
\begin{enumerate}[label=\roman*)]
    \item sampling of a point $p_i\in P$ with probability $\frac{\gamma_i}{\sum_{i\in [n]}\gamma_i}$ in $O(\log(n))$ time, \item update of the weight $\gamma_i$ of a point $p_i\in  P$, in $O(\log n)$ time.
\end{enumerate}    
Sampling uses $O\left(\log \left(\sum_{i\in [n]}\gamma_i\right)\right)$ random bits. If bit-complexity were taken into account, both preprocessing and updates would incur an additional $O\left(\log \left(\sum_{i\in [n]}\gamma_i\right)\right)$ bit-cost per arithmetic operation. 
\end{restatable}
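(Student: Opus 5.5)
We will realize the structure as a complete binary tree (a Fenwick/segment-tree layout) over the index set $[n]$. Each leaf $i$ stores $\gamma_i$. Each internal node $v$ stores $s_v=\sum_{i\in \mathrm{leaves}(v)}\gamma_i$, the total weight of the leaves in its subtree. Pad $n$ up to the next power of two with zero-weight leaves, so the height is $\lceil\log_2 n\rceil$.

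\textbf{Preprocessing.} Compute all $s_v$ bottom-up, one addition per node, in $O(n)$ arithmetic operations. We also keep an array mapping each index $i$ to its leaf, so a leaf can be reached in $O(1)$ time. This gives $O(n)$ words of storage, which is certainly within the claimed $O(n\log n)$ bound. If the space is measured in bits, each stored sum has $O(\log\sum_i\gamma_i)$ bits, which accounts for the stated bit-cost overhead.

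\textbf{Updates.} To change $\gamma_i$ to $\gamma_i'$, walk from leaf $i$ to the root and add $\gamma_i'-\gamma_i$ to every $s_v$ on the path. This takes $O(\log n)$ time, and the invariant that every $s_v$ equals the sum of its subtree is maintained.

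\textbf{Sampling.} Draw an integer $U$ uniformly from $\{0,\ldots,s_{\mathrm{root}}-1\}$; this uses $O(\log\sum_i\gamma_i)$ random bits. Then descend from the root. At a node with children $\ell,r$: if $U<s_\ell$, go to $\ell$; otherwise set $U\leftarrow U-s_\ell$ and go to $r$. Output the leaf reached. The descent takes $O(\log n)$ time.

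\textbf{Correctness of sampling.} Order the leaves left to right and consider the prefix intervals $I_i=[\sum_{j<i}\gamma_j,\sum_{j\le i}\gamma_j)$. By induction on depth, the descent reaches leaf $i$ exactly when $U\in I_i$. The invariant at each node is that $U$ (after the subtractions made so far) lies in $[0,s_v)$ and indexes the prefix intervals restricted to $v$'s subtree. Since $|I_i|=\gamma_i$, the probability of reaching leaf $i$ is $\gamma_i/\sum_j\gamma_j$. Zero-weight padding leaves have empty intervals and are never returned.

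\textbf{Main obstacle.} The only nontrivial point is generating $U$ exactly uniformly with the stated number of random bits when $s_{\mathrm{root}}$ is not a power of two. The standard fix is rejection sampling: draw $\lceil\log_2 s_{\mathrm{root}}\rceil$ bits and retry if the value is at least $s_{\mathrm{root}}$. Each trial succeeds with probability greater than $1/2$, so the expected number of bits is $O(\log\sum_i\gamma_i)$ and the expected time is $O(\log n)$. Alternatively, we can treat drawing a uniform integer as a unit-cost primitive, in the same word-RAM model the paper's bit-cost remark implicitly assumes. Each comparison or addition involves numbers of $O(\log\sum_i\gamma_i)$ bits, which gives the claimed additional per-operation bit-cost.
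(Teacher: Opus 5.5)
Your proof is correct and matches the paper's: a balanced binary tree storing subtree sums, sampling by drawing a uniform integer in $[\sum_i\gamma_i]$ and descending to the leaf whose prefix interval contains it, and updating by adding $\gamma_i'-\gamma_i$ along the leaf-to-root path. Your point that the random-bit bound can hold only in expectation (via rejection sampling) or under a unit-cost uniform-integer primitive, when $\sum_i\gamma_i$ is not a power of two, is a valid refinement of something the paper passes over.
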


\subsection{Embeddings}
We use the following version of the Johnson-Lindenstrauss (JL) lemma~\cite{JL84}, as expressed for a single vector and Gaussian matrices. 
\begin{theorem}[\cite{IN07}]
\label{lem:JL}
Let $x\in\mathbb{R}^d$ be fixed and let $k\in\mathbb{N}$.
Draw a random matrix $A\in\mathbb{R}^{k\times d}$ with i.i.d.\ entries
$A_{ij}\sim \mathcal{N}(0,1/k)$.
Then for any $\varepsilon\in(0,1)$,
\[
\Pr\!\left[
  \bigl|\|Ax\|_2^2-\|x\|_2^2\bigr|
  > \varepsilon\,\|x\|_2^2
\right]
\;\le\;
2\cdot \exp\!\left(-\frac{\eps^2 k}{8}\right).
\]
\end{theorem}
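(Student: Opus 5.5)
Since this is the Johnson--Lindenstrauss lemma as stated in~\cite{IN07}, the plan is to reduce it to a chi-squared tail bound and then apply the standard Chernoff (moment generating function) argument to each tail.

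\textbf{Reduction to a chi-squared variable.} If $x=0$ there is nothing to prove, so assume $x\neq 0$. By scaling we may take $\|x\|_2=1$. The $i$-th coordinate of $Ax$ is $\sum_j A_{ij}x_j$. This is a linear combination of independent Gaussians, so it is distributed as $\mathcal{N}(0,\|x\|_2^2/k)=\mathcal{N}(0,1/k)$. The $k$ coordinates depend on disjoint rows of $A$, so they are independent. Hence $Z:=k\|Ax\|_2^2=\sum_{i=1}^k g_i^2$ with $g_i$ i.i.d.\ $\mathcal{N}(0,1)$, i.e.\ $Z$ is $\chi^2_k$. The event in the statement is exactly $Z\ge(1+\eps)k$ or $Z\le(1-\eps)k$. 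We bound each tail by $\exp(-\eps^2k/8)$ and take a union bound.

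\textbf{Upper tail.} Use $\EE[e^{\lambda g^2}]=(1-2\lambda)^{-1/2}$ for $\lambda<1/2$. Markov's inequality gives
\[
\Pr[Z\ge(1+\eps)k]\le (1-2\lambda)^{-k/2}e^{-\lambda(1+\eps)k}.
\]
Choosing $\lambda=\frac{\eps}{2(1+\eps)}$ yields the bound $\exp\!\bigl(-\tfrac{k}{2}(\eps-\ln(1+\eps))\bigr)$. It remains to check that $\eps-\ln(1+\eps)\ge\eps^2/4$ on $[0,1]$. Let $h(\eps)=\eps-\ln(1+\eps)-\eps^2/4$. Then $h(0)=0$ and
\[
h'(\eps)=\frac{\eps}{1+\eps}-\frac{\eps}{2}=\frac{\eps(1-\eps)}{2(1+\eps)}\ge 0
\]
on $[0,1]$, so $h\ge 0$ there. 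This gives $\exp(-\eps^2k/8)$.

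\textbf{Lower tail.} Apply the same argument to $e^{-\lambda Z}$ with $\lambda>0$, using $\EE[e^{-\lambda g^2}]=(1+2\lambda)^{-1/2}$. Choosing $\lambda=\frac{\eps}{2(1-\eps)}$ yields
\[
\Pr[Z\le(1-\eps)k]\le\exp\!\bigl(-\tfrac{k}{2}(-\eps-\ln(1-\eps))\bigr).
\]
Since $-\ln(1-\eps)=\sum_{j\ge1}\eps^j/j\ge \eps+\eps^2/2$, this is at most $\exp(-\eps^2k/4)\le\exp(-\eps^2k/8)$.

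Summing the two tails gives the claimed $2\exp(-\eps^2k/8)$. The only step needing care is the elementary inequality for the upper tail, where the constant $1/8$ is tight enough that the crude Taylor bound $\eps^2/2-\eps^3/3$ does not suffice at $\eps$ near $1$. The monotonicity argument for $h$ handles this.
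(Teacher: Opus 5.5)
Your proof is correct. The reduction to $Z=k\|Ax\|_2^2\sim\chi^2_k$ is sound, and so are the choices $\lambda=\eps/(2(1+\eps))$ for the upper tail and $\lambda=\eps/(2(1-\eps))$ for the lower tail. The two elementary inequalities also check out: $\eps-\ln(1+\eps)\ge\eps^2/4$ on $[0,1]$ (via $h'(\eps)=\eps(1-\eps)/(2(1+\eps))\ge 0$), and $-\eps-\ln(1-\eps)\ge\eps^2/2$. Using non-strict inequalities in the tails only enlarges the event, so the bound still holds.

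The paper gives no proof of this statement. It imports it as a black box from \cite{IN07} and uses it once, with a union bound over $n$ vectors, in the proof of \Cref{thm:final}. Your standard moment-generating-function argument is therefore a different, self-contained route. What the citation buys the paper is brevity. What your argument buys is a direct check of the specific constant $1/8$ stated here. You also correctly note that the cubic Taylor bound $\eps^2/2-\eps^3/3$ only reaches $\eps^2/4$ for $\eps\le 3/4$, which is why the monotonicity argument is needed to cover all of $(0,1)$.
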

We also make use of the following result on terminal embedding for Euclidean spaces.
\begin{theorem}[\cite{NN19}]
\label{thm:terminals}
Let $T \subset \RR^d$ be a finite set of 
\emph{terminals}. For every $\eps \in (0,1)$ there exists a mapping 
$f : \RR^d \to \RR^k$ with 
$k = O\!\left(\eps^{-2} \log |T|\right)$ 
such that for all $q \in T$ and all $p \in \RR^d$,
\[
(1 - \eps)\, \|q-p\|_2 
\;\le\; \|f(q) - f(p)\|_2 
\;\le\; (1 + \eps)\, \|q-p\|_2.
\]
For any $T$, and any $c>0$, we can compute a mapping $f$ in $\poly(|T|)$ time such that the above property holds with probability at least $1-n^{-c}$. Moreover, for any $p\in \RR^d$, $f(p)$ can be computed in $\poly(|T|)$ time. 
\end{theorem}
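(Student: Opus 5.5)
The plan follows the outer-extension approach of Narayanan and Nelson~\cite{NN19}. First we fix a linear map $\Pi\in\RR^{k\times d}$ with i.i.d.\ Gaussian entries (as in \Cref{lem:JL}) and $k=O(\eps^{-2}\log|T|)$. We require that $\Pi$ preserves, up to additive $\eps$, all inner products $\langle u,v\rangle$ where $u,v$ range over the normalized differences $U=\{(t-t')/\|t-t'\| : t,t'\in T\}$. On terminals we set $f(t)=(\Pi t,0)\in\RR^{k+1}$. The terminal--terminal distortion then follows from \Cref{lem:JL} and a union bound over the $O(|T|^2)$ difference vectors. Inner products follow by polarization, applying \Cref{lem:JL} to $u\pm v$ for the $O(|T|^4)$ pairs; this only changes constants in $k$.

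For a non-terminal $p$, let $x\in T$ be its nearest terminal and write $p-x$ for the offset. We look for $z\in\RR^k$ satisfying two constraints:
\[
\|z\|\le\|p-x\| \quad\text{and}\quad \bigl|\langle z,\Pi(t-x)\rangle-\langle p-x,t-x\rangle\bigr|\le\eps\|p-x\|\,\|t-x\| \quad\text{for all } t\in T .
\]
Given such a $z$, we set
\[
f(p)=\Bigl(\Pi x+z,\ \sqrt{\|p-x\|^2-\|z\|^2}\Bigr).
\]
For any $t\in T$ we expand $\|f(p)-f(t)\|^2=\|p-x\|^2-2\langle z,\Pi(t-x)\rangle+\|\Pi(t-x)\|^2$ and compare it term by term with $\|p-t\|^2=\|p-x\|^2-2\langle p-x,t-x\rangle+\|t-x\|^2$. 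The error is $O(\eps)\bigl(\|p-x\|\,\|t-x\|+\|t-x\|^2\bigr)$. Since $x$ is the nearest terminal, the triangle inequality gives $\|t-x\|\le 2\|p-t\|$ and $\|p-x\|\le\|p-t\|$. Hence the error is $O(\eps)\|p-t\|^2$, and rescaling $\eps$ by a constant gives the claimed $(1\pm\eps)$ distortion.

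The main obstacle is showing that a feasible $z$ always exists. The plan is to write feasibility as a min--max over $z$ in the ball of radius $\|p-x\|$ against convex combinations of the (signed) linear constraints. Von Neumann's minimax theorem reduces this to the following: for every convex combination $w$ of the vectors $\pm(t-x)/\|t-x\|$, some $z$ in the ball satisfies $\langle z,\Pi w\rangle\ge\langle p-x,w\rangle-\eps\|p-x\|$. Taking $z$ proportional to $\Pi w/\|\Pi w\|$, it suffices that $\|\Pi w\|\ge\|w\|-\eps$ uniformly over the convex hull of $U$. This is a uniform norm-preservation statement over a convex hull, not over the finite set $U$. I would prove it with Gordon's theorem (Gaussian mean width): the hull of $O(|T|^2)$ unit vectors has Gaussian width $O(\sqrt{\log|T|})$, so $k=O(\eps^{-2}\log|T|)$ suffices. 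As a fallback, the pairwise inner-product preservation on $U$ already controls $\|\Pi w\|^2$ for convex combinations $w$, up to an additive $\eps$, by expanding the quadratic form. This is the step I would try first because it avoids Gordon entirely.

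For the algorithmic claims, we sample $\Pi$ and verify the inner-product conditions on $U$ directly in $\poly(|T|)$ time. If they fail, we resample; $O(c\log n)$ independent repetitions give success probability $1-n^{-c}$. For a query $p$ we compute its nearest terminal in $O(d|T|)$ time. We then solve the convex feasibility program for $z$, which is a second-order cone program with $O(|T|)$ linear constraints in $k$ variables, by the ellipsoid or an interior-point method in $\poly(|T|)$ time. Feasibility is guaranteed by the previous paragraph whenever $\Pi$ passed verification.
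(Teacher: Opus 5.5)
\paragraph{Relation to the paper.} The paper does not prove this statement; it imports it from \cite{NN19}. Your outline is the Narayanan--Nelson argument: the nearest-terminal outer extension, the extra coordinate $\sqrt{\|p-x\|^2-\|z\|^2}$, and the minimax reduction to a lower bound on $\|\Pi w\|$ over $\mathrm{conv}(U)$. Your distortion calculation and your minimax reduction are both correct.

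\paragraph{The gap.} It lies in the step you flag as the crux, specifically in the fallback you say you would try first. The minimax step needs the \emph{non-squared} additive bound $\|\Pi w\|\ge\|w\|-\eps$ on $\mathrm{conv}(U)$. Pairwise preservation $|\langle\Pi u,\Pi u'\rangle-\langle u,u'\rangle|\le\eps$ on $U$ does give $\bigl|\|\Pi w\|^2-\|w\|^2\bigr|\le\eps$ on the hull. But hull points can have norm $\Theta(\sqrt{\eps})$, so this only converts to $\|\Pi w\|\ge\|w\|-\sqrt{\eps}$, and the loss is real.

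Take $T=\{0,e_1,\dots,e_N\}$ with $N=1/\eps$, and a linear $\Pi$ with $\langle\Pi e_i,\Pi e_j\rangle$ equal to $1$ if $i=j$ and $-\eps$ otherwise. This is realizable: the Gram matrix $(1+\eps)I-\eps J$, with $J$ the all-ones matrix, is positive definite of rank $N\le k$. Every inner product on $U$ is then preserved to within $\eps$, and every terminal distance to within a factor $1\pm\eps$. Yet $w=\frac1N\sum_i e_i$ has $\|w\|=\sqrt{\eps}$ and $\|\Pi w\|=\eps$. Now let $p$ be a small positive multiple of $w$, so that $x=0$. Averaging your constraints over $t=e_1,\dots,e_N$ shows that no feasible $z$ exists unless the tolerance is at least $\sqrt{\eps}-\eps$.

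Hence the fallback only yields $k=O(\eps^{-4}\log|T|)$. For the same reason, your algorithmic claim that feasibility is guaranteed whenever $\Pi$ passes the inner-product check is false: that check does not certify the property you need.

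\paragraph{What is needed instead.} The Gaussian-width route is therefore not optional, and it must be used in its inhomogeneous form. Classical Gordon (escape through the mesh) lower-bounds $\inf_{x\in S}\|\Pi x\|$ for $S$ on the unit sphere, i.e.\ it gives multiplicative control on a cone. That is impossible here:
\begin{itemize}
\item the span of $U$ can have dimension $|T|-1\gg k$, so $\Pi$ has a kernel on it;
\item radially projecting $\mathrm{conv}(U)$ onto the sphere destroys the $O(\sqrt{\log|T|})$ width bound. For $T=\{0,e_1,\dots,e_{|T|-1}\}$ the projection is the whole unit sphere of a $(|T|-1)$-dimensional subspace, of width about $\sqrt{|T|}$.
\end{itemize}
What you need is the additive deviation bound for arbitrary bounded sets,
\[
\EE\sup_{w\in K}\bigl|\|\Pi w\|-\|w\|\bigr|\le C\,k^{-1/2}\,\EE\sup_{w\in K}|\langle g,w\rangle| .
\]
This is Schechtman's Gordon-type bound for Gaussian matrices, or the matrix deviation inequality of Liaw, Mehrabian, Plan and Vershynin. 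Apply it to $K=\mathrm{conv}(U)$, whose Gaussian width equals that of $U$, namely $O(\sqrt{\log|T|})$. Gaussian concentration of this $k^{-1/2}$-Lipschitz supremum then gives the high-probability version. This is the ``$\eps$-convex hull distortion'' property of \cite{NN19}. With it, $k=O(\eps^{-2}\log|T|)$ suffices and the rest of your proof goes through.

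\paragraph{Fixing the algorithm.} Do not treat the inner-product check as a certificate. Either rely on the probabilistic hull guarantee, or handle the case where $f$ is only evaluated on a known finite set (as in \Cref{lemma:lightedge}, where it is $P\cup N_{\delta}$) directly. In that case, solve the programs for those points, check the distortions, and resample $\Pi$ on failure.
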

\subsection{Model}
\label{sec:model}
We assume that our algorithms operate in the Real-RAM model with the additional assumption that the floor function can be computed in constant time. 
While it is known that this model is strictly more powerful than the classical word‑RAM~\cite{C06}, it is often used in the Computational Geometry literature~\cite{H11}. We stress that we do not use encodings that pack superpolynomial information into a single machine word or real number, and our use of $\lfloor\cdot\rfloor$ is purely for convenience (e.g., discretization and bucketing).

One building block of our algorithms is a data structure known as the dictionary. 
A \emph{dictionary} is a data structure which stores a set of (key, value) pairs
and when presented with a key, either returns the corresponding value, or returns that the key is
not stored in the dictionary. In our case, keys are typically vectors in $\mathbb{Z}^d$ and values are pointers to lists. 
In the Real-RAM model, we can use a balanced binary tree to build a dictionary with $n$ values, using $\tilde{O}(dn)$ preprocessing time and storage and $\tilde{O}(d)$ query time.  
\section{Approximate Stabbing Queries}
\label{section:stabbing}
In this section, we design a solution for our main subproblem, that is, the problem of preprocessing a set of points to support queries that ask to determine if a query $\eps$-stabs the given pointset. In particular, the data structure is required to determine if the query $\eps$-stabs the pointset, if all points are near the query, or if all points are far from the query. The resulting data structure returns approximate answers, meaning that it allows for wrong decisions when the pointset is ``nearly" $\eps$-stabbed, as formalized in \Cref{thminternalnode}. This is sufficient for our purposes, which are building these data structures in internal nodes of a partition tree to assist in traversing the tree to answer a query.

Our main technical ingredient is a randomized embedding to the Hamming metric, which is implemented using LSH to partition the space into buckets and then applying random binarization of the buckets. A similar argument has been used in Alman et al.~\cite{ACW16}, where they first embed the points to $\ell_1$ and then use an LSH function to partition points into buckets; each bucket is then assigned to a random bit. Repeating the above sufficiently many times gives the desired embedding. We obtain a slightly more direct construction that does not need to embed the points in $\ell_1$, but instead uses the LSH family of Datar et al.~\cite{DIIM04}. The final embedding succeeds in preserving distances with respect to a single scale; distances considered small with respect to a given threshold remain small, and large distances remain large. This is formally stated in the following lemma. 
%Its proof is diverted to \Cref{app:embedtohamming}.

\begin{restatable}{lemma}{lemmaembedtohamming}
\label{lemma:embedtohamming}
For any $\delta ,\eps \in (0,1)$, there exists a randomized embedding $f:~\RR^d \to \{0,1\}^{d'}$, and specific $\theta\in [d']$, such that, for any $p,q \in \RR^d$, with probability at least $1-\delta$,
\begin{itemize}
    \item if $\|p-q\|\leq 1$ then $\Pr\left[\|f(p)-f(q)\|_1 > \theta\right]\leq e^{-\frac{\eps^2 d'}{19200}}$
    \item if $\|p-q\|\geq (1+\eps)$ then $\Pr\left[f\|(p)-f(q)\|_1 < \theta+\eps d'\right]\leq e^{-\frac{\eps^2 d'}{19200}}$.
\end{itemize}
Moreover, for any $p\in \RR^d$, $f(p)$ can be computed in ${O}(d)$ running time. 
\end{restatable}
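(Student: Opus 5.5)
We follow the route sketched before the statement: Datar et al.~\cite{DIIM04} bucketing combined with random binarization of buckets, repeated $d'$ times independently. Each coordinate $i\in[d']$ of $f$ draws a Gaussian vector $a_i\sim\mathcal N(0,I_d)$, a shift $b_i$ uniform in $[0,w)$ for a window width $w$ to be fixed (a constant), and a uniformly random function $g_i:\ZZ\to\{0,1\}$. It sets $f_i(p)=g_i\bigl(\lfloor (\langle a_i,p\rangle+b_i)/w\rfloor\bigr)$. Each coordinate takes $O(d)$ time, and the $g_i$ are evaluated lazily via hashing of the bucket index. With $d'$ a constant depending on $\eps$, or by using a single Gaussian matrix, this gives the claimed $O(d)$ evaluation time up to that dependence, and the floor is allowed by the model of \Cref{sec:model}.

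\textbf{Step 1 (per-coordinate collision probability).} For fixed $p,q$ with $t=\|p-q\|$, the projection $\langle a_i,p-q\rangle$ is distributed as $t\cdot Z$ with $Z\sim\mathcal N(0,1)$. By \cite{DIIM04}, $p$ and $q$ land in the same bucket with probability
\[
\phi(t)=\int_0^w \tfrac{1}{t}\,\varphi(s/t)\,(1-s/w)\,ds ,
\]
which is monotonically decreasing in $t$. If the buckets differ, the random bits $g_i$ disagree with probability exactly $1/2$; if they coincide, the bits agree. Hence
\[
\Pr[f_i(p)\neq f_i(q)]=\tfrac12\bigl(1-\phi(t)\bigr),
\]
independently across $i$.

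\textbf{Step 2 (gap).} Set $\alpha=\tfrac12(1-\phi(1))$ and $\beta=\tfrac12(1-\phi(1+\eps))$. By monotonicity, any $t\le1$ gives disagreement probability at most $\alpha$, and any $t\ge1+\eps$ gives at least $\beta$. The main obstacle is a quantitative lower bound $\beta-\alpha\ge c_0\eps$ for an explicit constant $c_0$ (the constant $19200$ suggests $c_0$ around $1/20$ after losses). I would choose $w$ a constant, e.g.\ $w=4$, and differentiate $\phi$ in $t$. A direct computation gives
\[
\phi'(t)=-\tfrac{2}{\sqrt{2\pi}\,w}\bigl(1-e^{-w^2/(2t^2)}\bigr),
\]
so on $t\in[1,2]$ we have $|\phi'(t)|\ge c$ for an absolute constant $c$. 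Hence $\phi(1)-\phi(1+\eps)\ge c\eps$ for $\eps\le1$.

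\textbf{Step 3 (threshold and concentration).} Take $\theta=\lceil(\alpha+(\beta-\alpha)/4)\,d'\rceil$. The distance $\|f(p)-f(q)\|_1$ is a sum of $d'$ independent Bernoullis. In the near case its mean is at most $\alpha d'$, and exceeding $\theta$ requires a deviation of at least $\gamma d'$ with $\gamma=(\beta-\alpha)/4$. By Hoeffding this has probability at most $e^{-2\gamma^2 d'}$. In the far case the mean is at least $\beta d'$, and falling below $\theta+\eps d'$ needs a deviation of $(\beta-\alpha)\cdot\tfrac34 d'-\eps d'$. For this to be positive, $\eps$ in the threshold shift must be measured against $\beta-\alpha$. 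So I would rescale: choose $w$ small enough, or replace $\eps d'$ by $c_0\eps d'/2$ in the internal analysis, so that the two-sided slack is still $\Omega(\eps)d'$.

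I expect this bookkeeping to be the delicate point, i.e.\ ensuring $\beta-\alpha>\tfrac43\eps$ with the stated margin. If the raw gap from Step 2 is too small, I would first try repeating each coordinate with an OR or AND of several bucket comparisons to amplify the gap, and then apply Hoeffding. Once the slack is at least $\eps/40$ per coordinate, Hoeffding gives $e^{-2(\eps/40)^2 d'}\le e^{-\eps^2 d'/19200}$ as stated. The outer ``probability at least $1-\delta$'' is trivially satisfied, or can absorb a failure event such as the lazy hashing of $g_i$.
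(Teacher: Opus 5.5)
\textbf{Same construction as the paper.} Your Steps 1--3 follow the paper's proof: Datar et al.\ bucketing, a random bit per bucket, $d'$ independent copies, per-coordinate disagreement $\tfrac12(1-\phi(t))$, a threshold between the two means, and Chernoff/Hoeffding. The paper takes $w=1+\eps$, asserts $p_1-p_2\ge\eps/5$ and sets $\theta=\mu_1(1+\eps/80)$. Your $w=4$ with the explicit $\phi'$ gives $\beta-\alpha\ge 0.086\,\eps$ more transparently.

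\textbf{The gap.} It is the one you flag yourself: the far-case bound at threshold $\theta+\eps d'$. None of your remedies closes it.
\begin{itemize}
\item Shrinking $w$ only hurts. With $x=w/t$,
\[
|\phi'(t)|=\frac{2}{\sqrt{2\pi}\,t}\cdot\frac{1-e^{-x^2/2}}{x}\le\frac{0.36}{t},
\]
so $\beta-\alpha\le 0.18\,\eps$ for every $w$, far below the $\tfrac43\eps$ you need.
\item An ``OR'' of bucket comparisons is not an equivalence relation, so it does not give a bit that is a function of one point.
\item Concatenations, XORs of binarized buckets, or any other bit $b$ built from independent Datar hashes do not help either. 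For such $b$, $G(\|x-y\|)=\Pr[b(x)\neq b(y)]$ is a mixture of cut semimetrics in every dimension, hence of negative type on Hilbert space. By Schoenberg, $G(t)=\int(1-e^{-\lambda t^2})\,d\mu(\lambda)$ with total mass of $\mu$ at most $\sup G\le 1$. Hence $G(1+\eps)-G(1)\le\tfrac{2}{e}\eps<\eps$.
\item Replacing $\eps d'$ by $c_0\eps d'/2$ changes the statement rather than proving it.
\end{itemize}

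\textbf{The literal statement is out of reach for this construction.} Once $d'\ge 19200\ln 2/\eps^2$, the bound $e^{-\eps^2 d'/19200}$ is at most $1/2$. That forces $\theta$ to be at least a median of the binomial $\|f(p)-f(q)\|_1$ at distance $1$, and $\theta+\eps d'$ at most a median at distance $1+\eps$. This gives $\beta-\alpha\ge\eps-2/d'$, contradicting $\beta-\alpha\le 0.18\,\eps$. This regime is exactly the relevant one, since $d'=\Theta(\log n)$ in \Cref{thm:stabds}.

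The paper's own proof does not establish the shifted bound either. Its last display bounds $\Pr[\|f(p)-f(q)\|_1<(1+\eps/80)\mu_1]=\Pr[\|f(p)-f(q)\|_1<\theta]$, with no $+\eps d'$ shift. So your fallback is the right repair:
\begin{itemize}
\item State the far case at threshold $\theta$, or at $\theta+c\,\eps d'$ for a small constant $c$ with $c\,\eps\le(\beta-\alpha)/2$.
\item Your Steps 1--3 prove this version with room to spare in the constant $19200$.
\item In \Cref{thm:stabds}, the far search should then scan buckets at Hamming distance at least that threshold.
\end{itemize}

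\textbf{Smaller point.} Evaluating $f$ takes $d'$ inner products in $\RR^d$, i.e.\ $O(dd')$ time. Neither a constant $d'$ (it is $\Theta(\log n)$ where the lemma is used) nor stacking the Gaussians into one matrix brings this down to $O(d)$. The paper's $O(d)$ claim has the same flaw.
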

\begin{proof}
%We assume wlog that $r=1$. 
We employ the LSH family of \cite{DIIM04}. For $p \in {\RR}^d$, consider the random function $h(p)=\left \lfloor \frac{\langle p, g\rangle+s}{w} \right \rfloor $,
where $g$ is a $d$-dimensional vector with i.i.d.~random variables following $N(0,1)$, $s$ is a shift chosen uniformly at random from $[0,w]$, and $w$ is a fixed parameter defining the side-length of the one-dimensional grid on the line. We set $w=1+\eps$. Then, the probability of collision for two points $p,q$ is 
\[
\int_{s=0}^{1+\eps} \frac{2}{\sqrt{2\pi}\|p-q\|}e^{\left(-\frac{s^2}{2{\|p-q\|}^2}\right)}\left(1-\frac{s}{1+\eps}\right) dt. 
\]
Let $p_1 := \Pr[h(p)=h(q)\mid \|p-q\|\leq 1]$ and $p_2 := \Pr[h(p)=h(q)\mid \|p-q\|\geq 1+\eps]$. 
Using straightforward calculus, we obtain the following:
\begin{equation}\label{eq:pr}
    p_1-p_2 \geq \eps/5 \text{ and } p_2 \geq 1/4.
\end{equation}

We now sample $d'$ independent hash functions from the above family. 
For each $i\in[d']$, we denote by $h_i(P)$ the image of $P$ under $h_i$, i.e., a set of nonempty buckets.
Now each point in each nonempty bucket $b\in h_i(P)$ is mapped to $\{ 0,1 \}$: with probability $1/2$, for each $x\in b$, set $f_i(x)=0$,  otherwise set $f_i(x)=1$. 
We define $f(p)=(f_1(h_1(p)),\ldots,f_{d'}(h_{d'}(p))).$ 
Now, observe that 
\begin{align*}
    \|p-q\|\leq 1 \implies \mathbb{E}\left[\|f(p)-f(q)\|_1 \right] = \sum_{i=1}^{d'} \mathbb{E}\left[ |f_i(h_i(p))-f_i(h_i(q))|\right] 
    \leq  0.5 d' \left(1- p_1\right)=:\mu_1,
\end{align*}
and
\begin{align*}
    \|p-q\|\geq 1+\eps \implies \mathbb{E}\left[\|f(p)-f(q)\|_1 \right] = \sum_{i=1}^{d'} \mathbb{E}\left[ |f_i(h_i(p))-f_i(h_i(q))|\right] 
    \geq  0.5 d' \left(1- p_2\right)=:\mu_2. 
\end{align*}
Now, let $\theta=\mu_1(1+\eps/80)$, we conclude the statement. By (\ref{eq:pr}), for any $\eps\in (0,1)$, 
\[
\theta = \mu_1(1+\eps/80) <
\mu_2(1-\eps/80),
\]
and by Chernoff bounds, if $\|p-q\|\leq 1$, then 
\[
 \Pr\left[\|f(p)-f(q)\|_1\geq (1+\eps/80)\mu_1 \right]\leq exp\left(-\frac{\eps^2 d'}{19200}\right)
\]
and, if $\|p-q\|\geq 1+\eps$, then  
\[
 Pr\left[\|f(p)-f(q)\|_1 < (1+\eps/80)\mu_1 \right]\leq \Pr\left[[\|f(p)-f(q)\|_1\leq (1-\eps/80)\mu_2 \right]\leq exp\left(-\frac{\eps^2 d'}{19200}\right)
\]
\end{proof}

We aim to build a data structure that supports approximate ``stabbing queries". Given a query point $q$ we need to identify if there exists a pair of points in the input pointset that are (approximately) $\eps$-stabbed by $q$. 
Using the embedding provided by \Cref{lemma:embedtohamming}, we map points to keys in $\{0,1\}^{d'}$, and we store them in a dictionary, where each bucket is assigned the list of points mapped to it. We set $d'$ slightly below $\log n$ to balance the number of buckets $2^{d'}$ with the number of arbitrarily distorted distances from $q$, in a way that both remain sublinear. To answer a query, we search the dictionary for an approximate near neighbor, i.e., a point within distance $1+\eps$, and an approximate far neighbor, i.e., a point at a distance of at least $1$. The following lemma indicates that a simple search over the buckets of a single dictionary is sufficient to guarantee that if there is a point within distance $1$, we will find a point within distance $1+\eps$ with constant probability, and if there is a point at distance at least $1+\eps$, we will find a point at distance at least $1$ with constant probability. We remark that for finding an approximate near neighbor, one could instead use the data structure of Andoni et al.~\cite{ALRW17} that provides near-linear space with comparable query time. However, we are not aware of a comparable result for computing an approximate far neighbor (which we also need and do). The complete pseudocode is diverted to \Cref{app:pseudocode}. 
\begin{restatable}{lemma}{thmstabds}
\label{thm:stabds}
Let $P=\{p_1,\ldots,p_n\}\subset \RR^d$ and let $\eps\in (0,1)$. 
    We can build a randomized data structure with preprocessing time and space in $\tilde{O}(dn)$ and query time in $\tilde{O}(dn^{1-\frac{\eps^2}{19200 ( 1+\eps^2)}})$, such that for any query $q\in\RR^d$, it reports $\mathcal{I}$, $|\mathcal{I}|\leq 2$, as follows:
    \begin{itemize}
    \item if $\exists x \in P:~\|x-q\|\leq 1$ then $\Pr[\exists i\in \mathcal{I} \text{ such that }\|q-p_i\|\leq 1+\eps] \geq 0.95$,
    \item if $\exists y \in P:~\|y-q\|\geq 1+\eps$ then $\Pr[\exists j\in \mathcal{I} \text{ such that }\|q-p_j\| \geq 1]\geq 0.95$, 
    \end{itemize}
where the probabilities are taken over the randomness during preprocessing. 
\end{restatable}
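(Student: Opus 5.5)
We plan to use the embedding $f:\RR^d\to\{0,1\}^{d'}$ of \Cref{lemma:embedtohamming}, with threshold $\theta$, and then run a brute-force search over the Hamming cube. Write $\gamma=\eps^2/19200$.

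\textbf{Preprocessing.} Draw one embedding $f$ and choose $d'$ so that $2^{d'}=n^{1/(1+\eps^2)}$ up to constants, i.e.\ $d'=\lfloor \log_2 n/(1+\eps^2)\rfloor$. Each point $p\in P$ is stored under the key $f(p)$ in a dictionary (\Cref{sec:model}), whose value is the list of points with that key. Computing the keys takes $O(dd')$ per point, since each coordinate is one Datar et al.\ hash followed by a random bit, and it takes $O(d)$ per hash. Storing them gives $\tilde O(dn)$ time and space. If $d'$ turns out to exceed $d$, we can first apply \Cref{lem:JL}; it is simpler to just note that $d'=O(\log n)$ and absorb this factor into $\tilde O$.

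\textbf{Near neighbor query.} Compute $f(q)$ and enumerate all keys $z\in\{0,1\}^{d'}$ with $\|z-f(q)\|_1\le\theta$. For each such $z$ present in the dictionary, scan its list and examine up to $L$ points, where $L=\Theta(n^{1-\gamma d'/\log n})$ is chosen below. Each examined point has its true distance to $q$ computed, and the first one with $\|q-p\|\le 1+\eps$ is returned as $i$. The procedure for the far neighbor is symmetric: enumerate the keys at Hamming distance at least $\theta+\eps d'$, stop after $L$ examined points, and return the first $p$ with $\|q-p\|\ge 1$ as $j$.

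\textbf{Correctness.} Suppose $x\in P$ satisfies $\|x-q\|\le 1$. By \Cref{lemma:embedtohamming}, with probability $1-e^{-\gamma d'}$ its key lies in the enumerated set. The only way the search can fail to return a valid point is that $L$ "bad" points, those with $\|p-q\|>1+\eps$ whose keys also landed within distance $\theta$, were examined before reaching $x$. By the second item of the lemma, the expected number of bad points is at most $n e^{-\gamma d'}$. Markov's inequality then shows this count is below $L$ with probability at least $0.99$ if $L=100\,n e^{-\gamma d'}$. A union bound gives success probability $\ge 0.95$, provided $e^{-\gamma d'}\le 0.01$. This holds once $n$ exceeds a constant depending on $\eps$; for smaller $n$, a linear scan suffices. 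The far-neighbor case is identical, using the first item of the lemma.

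\textbf{Query time.} The time splits into two parts. Enumerating all keys costs at most $2^{d'}\cdot\tilde O(d')=\tilde O(n^{1/(1+\eps^2)})$ dictionary probes, each costing $\tilde O(d')$. Scanning costs $O(dL)$, where
\[
L=O\!\left(n\,e^{-\gamma d'}\right)=O\!\left(n^{1-\gamma\log_2 e/(1+\eps^2)}\right)\le O\!\left(n^{1-\frac{\eps^2}{19200(1+\eps^2)}}\right).
\]
Since $1/(1+\eps^2)\le 1-\eps^2/(19200(1+\eps^2))$, the enumeration term is also within the claimed bound.

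\textbf{Main obstacle.} The step we expect to be hardest is this balancing of $d'$: we need $2^{d'}$ and $ne^{-\gamma d'}$ simultaneously below the target exponent. We must also check that computing $f(q)$ within $\tilde O(d)$ does not break the bound, and that the lemma's probabilities are per pair, so Markov's inequality over the randomness of $f$ applies for the fixed query $q$.
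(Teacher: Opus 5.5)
Your proposal is correct and follows essentially the same route as the paper: one Hamming embedding from \Cref{lemma:embedtohamming} with $2^{d'}\approx n^{1/(1+\eps^2)}$ keys stored in a dictionary, enumeration of keys within Hamming distance $\theta$ (resp.\ at least $\theta+\eps d'$) capped at $100\,n e^{-\eps^2 d'/19200}$ examined false positives, Markov's inequality plus a union bound for correctness, and query time split into $2^{d'}$ dictionary probes plus $O(dL)$ distance computations. You are slightly more careful than the paper in two places: you state the condition $e^{-\eps^2 d'/19200}\le 0.01$ explicitly (the paper uses it implicitly when it bounds $n^{-\eps^2/(19200(1+\eps^2))}$ by $1/99$), and you handle small $n$ by a linear scan, which fits the claimed bound because in that regime $n^{\eps^2/(19200(1+\eps^2))}=O(1)$.
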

\begin{proof}
    \emph{Data Structure.}
We sample an embedding $f$, as in  \Cref{lemma:embedtohamming}, to map points of $P$ to the Hamming metric of dimension $d'=\frac{\log n}{1+\eps^2}$. We store points of $P$ in a dictionary $\cD$; each point $p\in P$ is associated with a key $f(p)\in \{0,1\}^{d'}$. For each non-empty bucket in the dictionary, we store a linked list with all points falling in that bucket. The complete pseudocode of the preprocessing algorithm can be found in \Cref{alg:preprocessing}, in \Cref{app:pseudocode}. Using balanced binary trees to implement $\cD$, the preprocessing algorithm runs in $\tilde{O}(dn)$ time, and the data structure uses $\tilde{O}(dn)$ space, as discussed in \Cref{sec:model}.
%\andreas{explain why for both time and space}.\yiannis{you mean time and space of dictionary? it's discussed in sectin "model" too, but i would assume it's basic..} 
    
    \emph{Query Algorithm.} When a query point $q\in \RR^d$ arrives, we first compute $f(q)$, and we start looking at points in buckets of $\cD$ whose key is of Hamming distance at most $\theta$ (as in \Cref{lemma:embedtohamming}) from $f(p)$. We stop when either we find a point whose pre-image is within distance $1+\eps$ from $q$, and we include it in $\mathcal{I}$, or when we have seen $100 n^{1-\frac{\eps^2}{19200(1+\eps^2)}}$ (irrelevant) points of $P$. Then, we start looking at points in buckets of $\cD$ whose key is of Hamming distance at least $\theta+\eps d'$ (as in \Cref{lemma:embedtohamming}) from $f(p)$. We stop when either we find a point whose pre-image is at a distance of at least $1$ from $q$, and we include it in $\mathcal{I}$, or when we have seen $100 n^{1-\frac{\eps^2}{19200(1+\eps^2)}}$ (irrelevant) points of $P$. The complete description of the algorithm can be found in \Cref{alg:query}, in \Cref{app:pseudocode}. The query time is upper bounded by the number of available buckets, i.e., $O(n^{1-\frac{\eps^2}{(1+\eps^2)}})$, plus the number of irrelevant candidates checked through the above iteration, i.e., $O(n^{1-\frac{\eps^2}{19200(1+\eps^2)}})$. 

\emph{Correctness.}
    Let $N_{far} = |\{p\in P\mid \|p-q\| \geq 1+\eps \text{ and } \|f(p)-f(q)\|_1\leq \theta +\eps d'\}|$ be the number of points of $P$ which are at a distance of at least $1+\eps$ from $q$ but are embedded to within a distance $\theta+\eps d'$ from $f(q)$. By \Cref{lemma:embedtohamming}, $\EE[N_{far}]=n\cdot \Pr[\|f(p)-f(q)\|<\theta+\eps d'\ |\ \|p-q\|\geq 1+\eps]\leq n\cdot e^{-\frac{\eps^2 d'}{19200}}=n^{1-\frac{\eps^2}{19200(1+\eps^2)}}$, as $d'=\frac{\log n}{1+\eps^2}$. By Markov's inequality $\Pr[N_{far}\geq 99 n^{1-\frac{\eps^2}{19200(1+\eps^2)}}]\leq 1/99$. Moreover, the probability that a point within distance $1$ from $q$ is embedded to a distance larger than $\theta$ from $f(q)$ is at most $n^{-\frac{\eps^2}{19200(1+\eps^2)}}$. By a union bound, any of the above two events happens with probability at most $1/99+ n^{-\frac{\eps^2}{19200(1+\eps^2)}} \leq 2/99<0.05$. 
    Hence, if there is a point of $P$ within distance $1$ from $q$, the preprocessing will be successful for $q$, with probability at least $0.95$, in that the query algorithm will find a point within distance $1+\eps$ from $q$.  
    Similarly, the expected number $\EE[N_{near}]$ of  points of $P$ which are within distance $1$ from $q$ but are embedded to a distance of at least $\theta$ from $f(q)$ is at most $n^{1-\frac{\eps^2}{19200(1+\eps^2)}}$. By a union bound and Markov's inequality, the probability that a far neighbor of $q$ (a point at distance at least $1+\eps$) is mapped correctly under $f$, and that $N_{near}\leq n^{1-\frac{\eps^2}{19200(1+\eps^2)}}$ is at least $0.95$. %Hence, by a union bound, the overall probability of success is at least $0.9$\andreas{this is not the theorem statement, maybe erase last sentence}.
\end{proof}

Now using \Cref{thm:stabds}, we can prove the main result of this section, namely, a data structure that determines, with high probability, if a query point $\eps$-stabs the preprocessed pointset, if it is near the entire pointset {(the pointset is \emph{covered} by $q$)}, or if it is far from the entire pointset {(the pointset is \emph{disjoint} from $q$)}. The data structure gives approximate answers in the following sense: if all points are within distance $1+\eps$, and there is at least one point at a distance at most $1$, it can falsely report that the pointset is $\eps$-stabbed. Similarly, if all points are at a distance larger than $1$, and there is at least one point at a distance of at least $1+\eps$, it can falsely report that the pointset is $\eps$-stabbed. If all points lie in the annulus $\{x\in \RR^d\mid \|x-q\|\in (1,1+\eps)\}$ then any answer is possible. 
Additionally, the success probability of the preprocessing is for each query individually, i.e., it is not for all queries simultaneously. Copies of this structure will be used as parts of the final data structure for the approximate spherical range counting problem.

\begin{theorem}\label{thminternalnode}
Let $P\subset \RR^d$ be a set of $n$ points and let $\eps\in (0,1)$.
We can build a randomized data structure such that for any query $q\in \RR^d$, it reports as follows:
\begin{itemize}
     \item if $\exists x \in P:~\|x-q\|\leq 1$ and $\exists y \in P:~\|y-q\|\geq 1+\eps$ then it returns ``stabbed",
    \item if $\forall y \in P:~\|y-q\|\leq 1$ then it returns ``covered"
    \item  if $\forall x \in P:~\|x-q\| \geq 1+\eps$ then it returns ``disjoint",
    \item if $\exists x \in P:~\|x-q\|\leq 1$ and $\forall y \in P:~\|y-q\|< 1+\eps$  then it can return ``stabbed", or ``covered", and
    \item if $\exists x \in P:~\|x-q\|\geq  1+\eps$ and $\forall y \in P:~\|y-q\|>1$ then it can return ``stabbed", or ``disjoint".
    \item if $\forall x \in P:~\|x-q\|\in (1,1+\eps)$ then it can return ``stabbed", or ``disjoint", or ``covered".
    \end{itemize}

%\andreas{for any or for all?}\yiannis{do you prefer the wording "for a fixed query $q$"?}
For each query and for any constant $c>0$, the preprocessing of the data structure succeeds with probability $1-n^{-c}$,
%For any query $q$, for any constant $c>0$, the preprocessing of the data structure succeeds with probability $1-n^{-c}$,
has $\tilde{O}(dn)$ preprocessing time and space and the query time is $\tilde{O}(dn^{1-\frac{\eps^2}{19200 ( 1+\eps^2)}})$.       
\end{theorem}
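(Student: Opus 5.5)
The plan is to build $L=\Theta(c\log n)$ independent copies of the data structure of \Cref{thm:stabds} on $P$. Each copy uses fresh randomness in its embedding $f$ from \Cref{lemma:embedtohamming}. On a query $q$, we run the query algorithm of every copy and take the union $\mathcal{I}^*$ of the reported index sets, so $|\mathcal{I}^*|\le 2L$. For each $i\in\mathcal{I}^*$ we compute $\|q-p_i\|$ exactly, in $O(d)$ time per point.

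We then set two flags. The first, $\mathrm{near}$, holds if some $i\in\mathcal{I}^*$ has $\|q-p_i\|\le 1+\eps$. The second, $\mathrm{far}$, holds if some $j\in\mathcal{I}^*$ has $\|q-p_j\|\ge 1$. We answer ``stabbed'' if both flags hold, ``covered'' if only $\mathrm{near}$ holds, and ``disjoint'' if only $\mathrm{far}$ holds. The last case cannot occur when $P\neq\emptyset$ and the preprocessing succeeds, and we resolve it arbitrarily, say as ``disjoint''. Space, preprocessing time and query time are $L$ times those of \Cref{thm:stabds}, which gives the stated $\tilde{O}$ bounds since $L=O(\log n)$.

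Correctness rests on two facts that hold deterministically, because we only set a flag after verifying the true distance. If $\mathrm{near}$ is set, some point of $P$ lies within distance $1+\eps$ of $q$. If $\mathrm{far}$ is set, some point lies at distance at least $1$. So a false flag is impossible, and the only failure mode is a missing flag.

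Fix $q$ and suppose some $x\in P$ has $\|x-q\|\le 1$. Each copy independently returns a point within distance $1+\eps$ with probability at least $0.95$, so all $L$ copies miss with probability at most $0.05^L\le n^{-c}/2$ for a suitable constant in $L$. The same bound holds for the far side when some $y$ has $\|y-q\|\ge 1+\eps$. A union bound shows that with probability $1-n^{-c}$ every flag whose premise holds is set. The probability is over the preprocessing only, and the query algorithm itself is deterministic given the structure.

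Finally we check the six cases against the flags.
\begin{itemize}
\item In the first case both premises hold, so we answer ``stabbed''.
\item In the second case $\mathrm{near}$ is set. The flag $\mathrm{far}$ could only be set by a point at distance $\ge1$; if every point is at distance $<1$ this is impossible. A point at distance exactly $1$ could set it, and I treat this boundary as the ``can'' case, or break the tie by using strict inequality in $\mathrm{far}$. The answer is ``covered''.
\item The third case is symmetric: $\mathrm{far}$ is set, $\mathrm{near}$ cannot be, and the answer is ``disjoint''.
\item In the fourth case $\mathrm{near}$ is set and $\mathrm{far}$ may or may not be, giving ``stabbed'' or ``covered''.
\item The fifth case is symmetric, giving ``stabbed'' or ``disjoint''.
\item In the sixth case any output is allowed.
\end{itemize}

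The main thing to get right is this boundary bookkeeping: matching the closed and open inequalities of the flags to the case split of the theorem, and making sure the per-copy $0.95$ guarantees really are independent across copies so that the amplification is valid. Everything else follows directly from \Cref{thm:stabds}.
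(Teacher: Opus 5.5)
Your proposal is correct and is essentially the paper's proof. It uses $L=\Theta(c\log n)$ independent copies of the structure from \Cref{thm:stabds}, the same near/far decision rule (the paper phrases ``covered'' as ``all returned points lie within distance $1+\eps$''), and amplification $2\cdot 0.05^{L}\le n^{-c}$ via a union bound.

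Your boundary bookkeeping is more careful than the paper's, which ignores the issue, but two details should be tightened. First, the no-flag outcome can occur: it happens in the sixth case, where any answer is allowed, so your arbitrary choice is harmless. Second, the strict-inequality tie-break must be made inside the searches of \Cref{thm:stabds}, not only in the final flags, and on both sides: accept far witnesses only at distance $>1$ and near witnesses only at distance $<1+\eps$. The near side matters because in the third case a returned point at distance exactly $1+\eps$ would set your near flag. This change is safe, because the points the modified searches then reject (distance $\le 1$, resp.\ $\ge 1+\eps$) are exactly the ones controlled by \Cref{lemma:embedtohamming}.
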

\begin{proof}
%We start by describing a data structure that achieves constant probability of success. The final result then follows by independently building $O(\log n)$ data structures. 
We build $L=c\log n$ independent data structures as in \Cref{thm:stabds}. For any query point $q\in \RR^d$, we query the data structures, and we report as follows. If we find a point $x$ such that $\|x-q\|\leq 1+\eps$ and a point $y$ such that $\|y-q\|\geq 1$ we return ``stabbed". Otherwise, if all returned points are within distance $1+\eps$, we return ``covered". Else, we return ``disjoint".

If there is a pair $x,y$ in $P$ such that $\|x-q\|\leq 1$ and $\|y-q\| \geq 1+\eps$ then the probability that all $L$ data structures fail to return a point $x'$ such that $\|x'-q\|\leq 1+\eps$ or a point $y'$ such that $\|y'-q\|\geq 1$ is by union bound at most $2\cdot 0.05^L \leq n^{-c}$. Moreover, if all points of $P$ are within distance $1$ from $q$ (resp.~at a distance of at least $1+\eps)$, then all returned points are within distance $1+\eps$ from $q$ (resp.~at a distance of at least $1$), implying that the query algorithm correctly identifies ``covered" (resp.~``disjoint").  
\end{proof}

\section{Spanning Trees with low $\eps$-stabbing number}\label{sec:spatreestab}

The $\eps$-stabbing number of a spanning tree (or of a set of edges in general) is the maximum number of edges of that tree (or that set) that are $\eps$-stabbed by any query range. Chazelle and Welzl showed in \cite{CW89} a close relationship between the stabbing number of a spanning tree (the stabbing number is the special case of $\eps$-stabbing number where $\eps=0$) and the maximum number of nodes visited, during a worst-case query, in a corresponding partition tree. %(the visiting number is the maximum number of nodes visited - both inner and leaves - during a worst case query). 
Motivated by this, we wish to extend this notion to the $\eps$-stabbing number and to achieve similar results, but without exponential dependency on $d$ as we deal with the approximation scheme. In this section, we show that a spanning tree with sublinear $\eps$-stabbing number exists, and can be computed in polynomial time. 
Following \cite{CW89}, we use MWU to find a spanning tree with small $\eps$-stabbing number. We remark that there are other alternatives~\cite{CM21,H09} in the literature for computing spanning trees with low stabbing numbers. While it is plausible that they can be modified to support our notion of $\eps$-stabbing, they all share the same bottleneck, i.e., they depend at least linearly on the number of ranges.  

%In \Cref{sec:sphrancou} we show how to get a partition tree with sublinear visiting number given a spanning tree with sublinear $\eps$-stabbing number.
%each node is assigned to a data structure for the approximate stabbing problem so that a query can be answered by traversing the tree and querying each data structure along the way.   

Similarly, to \cite{CW89}, we start by showing that for any set of ball queries, we can always find a pair of points that is not $\eps$-stabbed many times.

\begin{lemma}\label{lemma:lightedge}
    Let $P\subset \RR^d$ be a set of $n$ points and let $Q$ be a
multiset consisting of $m$ points from a universe set  $U \subset \RR^d$.  %$\cG_{\eps/\sqrt{d}}$. 
Let $\eps \in (0,1)$. 
There exists a pair of points in $P$ which is not $\eps$-stabbed
by more than $O\left(m \cdot \frac{d}{n^{\rho}}\right)$
%\todo[inline]{I think $\delta = \frac{d}{2^{\Theta(\eps \sqrt{\log n})}}$ which seems bad (?)} 
points of $Q$, where $\rho = \Theta(\eps^2/\log(1/\eps))$.

%\nc{Moreover, if $Q$ is stored in a multiset sampling data structure, we can find such a pair of points in $\tilde{O}(n\cdot \poly(d,\frac{1}{\eps})) \cdot O(\log |U|)$ time with probability $1-1/\poly(n)$.}
If $Q$ is stored in a multiset sampling data structure as in \Cref{lemma:multisampling}, then for any constant $c$, we can find such a pair of points in $\poly(n, d,\frac{1}{\eps}) \cdot O(\log |U|)$ time with probability $1-n^{-c}$.
\end{lemma}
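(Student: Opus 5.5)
The plan is to follow the Chazelle--Welzl sampling argument. I would take a small random sample of $Q$, find a pair of points of $P$ that no sampled query $\eps$-stabs, and use the $\eps$-net theorem (\Cref{thmepsnet}) to conclude that such a pair is stabbed by few queries of $Q$.

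\emph{Step 1 (sampling and the net argument).} For a pair $\{x,y\}$, let $R_{x,y}\subseteq Q$ be the set of queries $q$ that $\eps$-stab $\{x,y\}$, i.e.\ ($\|x-q\|\le 1$ and $\|y-q\|\ge 1+\eps$) or the symmetric condition. Each such range is a union of two intersections of a ball with the complement of a ball. Since balls in $\RR^d$ have VC dimension $O(d)$, this range space on $Q$ has VC dimension $O(d)$. I sample a multiset $R\subseteq Q$ of size $r$, uniformly with multiplicity, using the structure of \Cref{lemma:multisampling}. By \Cref{thmepsnet} with $D=O(d)$, with constant probability $R$ is an $\alpha$-net for $\alpha=O(d\log r/r)$. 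Hence any pair not $\eps$-stabbed by a query of $R$ is $\eps$-stabbed by at most $O(m\,d\log r/r)$ queries of $Q$. With $r=n^{\rho'}$ this gives $O(m d/n^{\rho})$ for a slightly smaller $\rho=\Theta(\rho')$.

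\emph{Step 2 (a pair avoided by all of $R$, via signatures and pigeonhole).} For each $q_j\in R$ and $p\in P$, I define a signature coordinate by quantizing $\|p-q_j\|$, clipped to $[1,1+\eps]$, into $O(1/\eps)$ intervals of length $\eps/4$. If two points have identical signatures, no $q_j$ can see one of them at distance $\le 1$ and the other at distance $\ge 1+\eps$. Here I use the margin $\eps/4$ so that the same conclusion survives slightly distorted distances. In high dimension the number of realized signatures could be as large as $r^{\Theta(d)}$, so I reduce dimension first. I apply a Gaussian projection $A$ (\Cref{lem:JL}) to $\RR^k$ with $k=C\eps^{-2}\log r$ and compute signatures from $\|Ap-Aq_j\|$. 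For a fixed point $p$ and query $q_j$, the distance is distorted by more than a factor $1\pm\eps/16$ with probability $r^{-\Omega(C)}$. A union bound over the $r$ queries and Markov's inequality then show that at least $n/2$ points have all $r$ of their distances preserved; the other points are discarded. For a surviving point, its projected signature determines, up to the $\eps/4$ slack, whether it is near or far from each $q_j$. I then bound the number of distinct projected signatures. All points at projected distance $>2$ from every $Aq_j$ share one signature. Every other point lies in a cell of the grid $\cG^k_{\eps/(8\sqrt k)}$ meeting one of $r$ balls of radius $2$, and points in a common cell have equal signatures up to the slack. So the number of classes is at most $1+r\cdot (O(\sqrt{k}/\eps))^{k}=r^{O(\eps^{-2}\log(k/\eps))}$. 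Choosing $r=n^{\rho'}$ with $\rho'=c\,\eps^2/\log(1/\eps)$ and a small constant $c$, this count is below $n/2$; the $\log k$ term is absorbed by adjusting $c$ and the $\tilde O$ factors. By pigeonhole, two surviving points share a class, and that pair is not $\eps$-stabbed by any $q_j\in R$.

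\emph{Step 3 (algorithm).} The construction is already algorithmic. Sampling $R$ takes $O(r\log|U|)$ time, and projecting and bucketing the points takes $\poly(n,d,1/\eps)$ time. To obtain a success guarantee rather than just a constant probability, I would verify the candidate pair directly by counting its stabbing queries in $Q$. This count can also be estimated with additional samples via \Cref{theo:etaepsilonapproxfunctionsrelative}. If the count is too large, I repeat the whole procedure; $O(c\log n)$ repetitions give success probability $1-n^{-c}$.

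\emph{Main obstacle.} I expect the main obstacle to be Step 2: making the number of signature classes polynomially smaller than $n$ while the sample size $r$ remains $n^{\Theta(\eps^2/\log(1/\eps))}$. The quantization has to interact correctly with the JL distortion so that equal classes certify non-stabbing, and the grid-cell count must give exponent $\eps^{-2}\log(1/\eps)$ rather than something like $\eps^{-2}\log r$. If the $\sqrt k$ factor in the cell size causes trouble, I would replace the grid by the terminal embedding of \Cref{thm:terminals}, with $R$ as terminals. That embedding preserves all point-to-terminal distances simultaneously, which removes the need to discard points, and I would count classes via a net of the $k$-dimensional balls instead of axis-parallel cells.
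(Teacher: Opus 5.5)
Your plan follows the paper's skeleton: sample a net of $Q$ of size about $n^{\rho}$ through the multiset structure, reduce to dimension $O(\eps^{-2}\log r)$ while keeping distances to the sample, bucket into a grid of side $\Theta(\eps/\sqrt{k})$, apply pigeonhole, and finish with the net property. The one genuinely different ingredient is the embedding.

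\begin{itemize}
\item The paper uses the terminal embedding of \Cref{thm:terminals} with the net as terminals. It preserves distances from \emph{every} point of $\RR^d$ to the terminals, so all $n$ points take part.
\item You use a plain Gaussian projection (\Cref{lem:JL}) to dimension $C\eps^{-2}\log r$ and discard the points whose distance to some sampled query is distorted.
\end{itemize}

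Your variant is valid and more elementary. Only the $nr$ distances between $P$ and $R$ matter, and losing half of $P$ costs nothing in the pigeonhole. Moreover, the discarded points can be identified exactly in $O(ndr)$ time by comparing $\|Ap-Aq_j\|$ with $\|p-q_j\|$. So you can resample $A$ until at least $n/2$ points survive, and this step never fails undetected.

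\textbf{The class count in Step 2 is wrong as written.} With $r=n^{\rho'}$ and $\rho'=c\eps^2/\log(1/\eps)$ you get $k=\Theta(c\log n/\log(1/\eps))$. Then $(\sqrt k)^k=n^{\Theta(c\log\log n/\log(1/\eps))}$, which exceeds $n$ for large $n$ however small $c$ is, so the pigeonhole fails. This $\log k$ cannot be absorbed into $\tilde O$, which hides polylogarithmic factors, not a $\log\log n$ in the exponent of $n$. You would end up with $\rho=\Theta(\eps^2/(\log(1/\eps)+\log\log n))$.

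What you need is the standard volume bound the paper uses. Every cell of side $\eps/(8\sqrt k)$ meeting a ball of radius $2$ lies inside a ball of radius $2+\eps/8$, whose volume is $(O(1)/\sqrt k)^k$. The $\sqrt k$ therefore cancels, leaving only $O(1/\eps)^k$ cells. The number of classes is then $1+r\cdot O(1/\eps)^k=n^{O(c)}<n/2$ for small $c$. Your fallback of counting via a net of the ball would give the same bound. The terminal embedding is beside the point here, since the paper's grid has the same $\eps/\sqrt{d'}$ side length.

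\textbf{Step 3 does not fit the time budget.} Counting the stabbing queries of the candidate pair in $Q$ costs time proportional to the number of distinct elements of $Q$. That can be as large as $|U|$; in \Cref{lemma:forest} it is $n\cdot O(1/\eps)^d$, while only an $O(\log|U|)$ factor is allowed. No verification is needed anyway. Add the $\log(1/\delta)=O(c\log n)$ term to the sample size so that $R$ is a net with probability $1-n^{-c-1}$, as the paper does. The JL step is already checked explicitly as described above.
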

\begin{proof}
If $d\geq n^{\rho}$, then the statement trivially holds, since $|Q|=m$. Thus, we assume $d< n^{\rho}$. 
For any two points $x,y$, let $S_{x,y}$ be the set of points in $\RR^d$ that $\eps$-stab $x,y$. We define a range space $\mathcal{R}= (\RR^d,\{ {S}_{x,y}\mid x,y\in \RR^d, x\neq y\})$. Notice that for any point $p\in \RR^d$, determining if $p\in S_{x,y}$ reduces to evaluating the signs of the following four polynomials of degree $2$ in $d$ variables: $\|p-x\|^2 - 1$, $\|p-x\|^2 - (1+\eps)^2$, $\|p-y\|^2 - 1$, $\|p-y\|^2 - (1+\eps)^2$. This implies (see, e.g., \cite{AB02}[Theorem 8.3]) that the VC dimension of $\mathcal{R}$ is $O(d)$. Hence, by \Cref{thmepsnet}, for any $\delta<1$ and constant $c>0$, 
a random sample $N_{\delta}$ of $O\left(\frac{d}{\delta}\left(\log\left(\frac{1}{\delta}\right)+\log n\right)\right) $ points of $Q$ is a $\delta$-net for $\mathcal{R}$ with probability $1-n^{-c-1}$. Moreover, we can find such a $\delta$-net by randomly sampling from $Q$, using the assumed multiset sampling data structure of \cref{lemma:multisampling}, in $\tilde{O}(\frac{d}{\delta}\log n) \cdot O(\log |U|)$ time. 

By \Cref{thm:terminals},  there exists a terminal embedding $f$ to dimension $d' = O(\eps^{-2}\log |N_{\delta}|)$ such that $\forall p\in N_{\delta},q \in \RR^d$, $\|f(p)-f(q)\| \in (1\pm \eps/20)\|p-q\|$. The time needed to compute $f(P)$ is in $O(n) \cdot \poly(|N_{\delta}|)$ for the probability of success to be at least $1-n^{-c-1}$.

Now let $x,y \in P$ be $\eps$-stabbed by some $q\in Q$, and further assume wlog that $\|x-q\|\leq 1$ and $\|y-q\|\geq 1+\eps$. We obtain $\|f(x)-f(q)\|\leq 1+\eps/20$ and 
$\|f(y)-f(q)\|\geq (1+\eps)(1-\eps/20)\geq  1+3\eps/4$. 
Hence, $f(x),f(y)$ are $(\eps/2, 1+\eps/20)$-stabbed by $f(q)$.  
Moreover, by the triangle inequality, $\|f(x)-f(y)\| \geq \eps/2$. Now consider the grid of side-length $\eps/(4\sqrt{d'})$ in $\RR^{d'}$ and let $U_B$ be the set of all grid cells that intersect with at least one ball of radius $1+\eps$ centered at a point of $f(N_{\delta})$. There are 
\[n_{\eps,d'} = |N_{\delta}| \cdot O(1/\eps)^{d'} = |N_{\delta}| \cdot O(1/\eps)^{O(\eps^{-2}\cdot \log |N_{\delta}|)}=|N_{\delta}|^{O(\eps^{-2}\log (1/\eps))}
\] such cells, where $|N_{\delta}|=O\left(\frac{d}{\delta}\left(\log\left(\frac{1}{\delta}\right)+\log n\right)\right)$.  Now let $\delta =\frac{d}{n^{\rho}}$ where $\rho = \Theta(\eps^2/\log(1/\eps))$ is chosen sufficiently small (in terms of the hidden constants) so that $|N_{\delta}|^{O(\eps^{-2}\log (1/\eps))} \leq \left(n^{\rho} \log n\right)^{O(\eps^{-2}\log (1/\eps))} < n-2$.
By the pigeonhole principle, there is either a pair of points of $P$ mapped to the same cell of $U_B$, under $f$, or there is a pair of points of $P$ that are not mapped to $U_B$. In either case, the two points are not $(\eps/2, 1+\eps/20)$-stabbed by any point of $f(N_{\delta})$. To see this, notice that if the two points are in the same cell, then they are within distance $\eps/4$ from each other, contradicting the fact that they are $(\eps/2, 1+\eps/20)$-stabbed (which would mean they are separated by a distance of at least $\eps/2$). If they are not mapped in $U_B$ then neither of them is within distance $1+\eps/20$ from any point of $f(N_{\delta})$, also implying that they cannot be $(\eps/2, 1+\eps/20)$-stabbed by any point of $f(N_{\delta})$. 
Since the two points are not $(\eps/2, 1+\eps/20)$-stabbed by any point of $f(N_{\delta})$ in the mapped space, their pre-images (under $f$) are not $\eps$-stabbed by any point of $N_{\delta}$. Since $N_{\delta}$ is a $\delta$-net, it must be that at most $\delta m$ points of $Q$ $\eps$-stab those two points.   

We now focus on the running time needed to find a pair of points as above. By \Cref{thm:terminals},  we can map points under $f$ in $\poly(n)$ time. 
To compute the two points, we essentially use the bucketing method (see, e.g.,~\cite{HIM12}) for balls centered at $f(N_{\delta})$. 
We first store all cells of $U_B$ in a dictionary data structure. Then, for each point  $x\in f(P)$, we query the data structure to see if its cell $C_x$ is stored, i.e., $C_x\in U_B$. If $C_x\in U_B$ then we store $x$ in a linked list in the bucket of $C_x$. We can stop and output any two points stored in the same linked list, or any two points whose cells do not belong to $U_B$. Iterating over all points of $f(P)$ costs  $\tilde{O}(dn)$ assuming that the dictionary is implemented using balanced binary trees. The overall probability of success follows from a union bound over failure events of the random sampling for computing $N_{\delta}$ and the failure event of computing the terminal embedding $f$.
\end{proof}

Knowing that a good pair always exists, and that we can find it, we repeat the process $n/2$ times, in order to build a good forest, in the sense that it has a small $\eps$-stabbing number. To do that, we use the MWU technique introduced in the context of trees with low stabbing number in \cite{CW89}. The proof is very similar to \cite{CW89}. %and is diverted to \Cref{app:lemma:forest}.% and used for similar procedures in \andreas{cite}.

\begin{restatable}{lemma}{lemmaforest}
\label{lemma:forest}
Let $P\subset \RR^d$ be a set of $n$ points and let $\eps \in (0,1)$. 
There is a forest of trees on $P$ with at least $n/2$
edges, such that every point of $\cG_{\gamma}^d$ $\eps$-stabs $\tilde{O}(dn^{1-\rho})$ edges, where $\gamma = \Theta(\eps/\sqrt{d})$, $\rho = \Theta(\eps^2/\log(1/\eps))$. 
For any $c>0$, we can compute such a forest in $ \poly(n,d,\frac{1}{\eps})
) + \tilde{O}(n)\cdot O(1/\eps)^d$ time, with probability of success $1-n^{-c}$.
\end{restatable}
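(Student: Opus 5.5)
We follow the multiplicative-weights scheme of Chazelle and Welzl~\cite{CW89}, with \Cref{lemma:lightedge} as the light-edge oracle. The relevant query universe $U$ is the set of grid points of $\cG_{\gamma}^d$ with $\gamma=\Theta(\eps/\sqrt{d})$ that lie within distance $2$ of some point of $P$. Points of the grid farther away cannot $\eps$-stab any pair of $P$, since such a point has no point of $P$ within distance $1$. Hence $|U|\le n\cdot O(1/\eps)^d$, and $U$ can be enumerated in $\tilde{O}(n)\cdot O(1/\eps)^d$ time by listing the grid cells near each input point and deduplicating them with a dictionary.

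We store the multiset $Q$, initially $U$ with all multiplicities $1$, in the sampling structure of \Cref{lemma:multisampling}. We then run $n/2$ rounds. We maintain a partition of $P$ into components (a union--find structure) and keep one representative per component. In round $i$, at least $n-i+1\ge n/2$ components remain. We apply \Cref{lemma:lightedge} to the set $P_i$ of representatives and the current weighted multiset $Q$, with $|P_i|\ge n/2$. This yields a pair $x,y\in P_i$ that is $\eps$-stabbed by at most a $\delta_i=O(d/|P_i|^{\rho})=O(d/n^{\rho})$ fraction of the total weight $W_i$. We add the edge $xy$, which merges two components and so keeps the edge set a forest. We then double the multiplicity of every $q\in U$ that $\eps$-stabs $xy$.

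\textbf{Potential argument.} Each round multiplies the total weight by at most $1+\delta_i$. After $n/2$ rounds this gives
\[
W_{\mathrm{final}}\le |U|\prod_i(1+\delta_i)\le |U|\exp\bigl(O(dn^{1-\rho})\bigr).
\]
A fixed query $q\in U$ that $\eps$-stabs $k$ forest edges has final weight $2^k\le W_{\mathrm{final}}$. Therefore
\[
k\le \log|U|+O(dn^{1-\rho})=O\bigl(d\log(1/\eps)+\log n\bigr)+O(dn^{1-\rho})=\tilde{O}(dn^{1-\rho}).
\]

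One subtlety is that stabbing a representative pair is not the same as stabbing an arbitrary pair of components. This is harmless: we only need a forest with at least $n/2$ edges whose own edges are rarely stabbed, and every edge we add joins two representatives, so the bound above applies directly to the edges of the forest. A second subtlety is that the weights grow to $2^{O(n)}$. \Cref{lemma:multisampling} only charges $O(\log W)=O(n+d\log(1/\eps))$ random bits per sample, which is polynomial. Updating the weights in each round costs $O(|U|\log|U|)$ if we scan $U$, giving $\tilde{O}(n)\cdot O(1/\eps)^d$ per round. To stay within the claimed additive bound, we would instead find only the stabbing grid points of the edge $xy$, namely those near $x$ or $y$, by enumerating the grid cells around $x$ and $y$. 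Amortizing this over all rounds needs care.

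\textbf{Main obstacle.} The main obstacle is this running-time accounting. The $O(1/\eps)^d$ enumeration must be paid only once, or near-linearly in total, rather than once per round. The query-dependent failure probability of \Cref{lemma:lightedge} also needs a union bound over the $n/2$ rounds, which we handle by choosing its constant $c$ as $c+1$. Each round of \Cref{lemma:lightedge} costs $\poly(n,d,1/\eps)\cdot O(\log|U|)$, and $\log|U|=O(d\log(1/\eps)+\log n)$ is polynomial, so these rounds contribute only to the $\poly(n,d,\frac{1}{\eps})$ term.
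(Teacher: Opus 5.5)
Your proposal is correct and is essentially the paper's proof. It uses the same Chazelle--Welzl multiplicative-weights scheme with \Cref{lemma:lightedge} as the light-edge oracle, doubles the weight of the grid points that stab the chosen edge, and reaches the same potential bound $k\le \log|U|+O(dn^{1-\rho})$; your union--find representatives are the paper's ``recurse on $P\setminus\{x\}$'', with the surviving endpoint as representative. The running-time ``obstacle'' you flag is not real: enumerating the $O(1/\eps)^d$ grid points of $B(x,1+\eps)\cup B(y,1+\eps)$ and updating their weights costs $O(1/\eps)^d$ per round up to logarithmic factors, so $n/2$ rounds give $\tilde{O}(n)\cdot O(1/\eps)^d$, exactly the claimed term and exactly the paper's accounting, with no amortization needed.
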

\begin{proof}
We initiate our discussion with a high-level description of our algorithm, which is sufficient to show that a forest with the desired properties exists. We then discuss implementation details and bound the running time of the algorithm.

Let $Q_0\gets \bigcup_{p\in P} B(p,(1+\eps))\cap \cG_{\gamma}^d$. Notice that points of $\cG_{\gamma}^d$ not in $\bigcup_{p\in P} B(p,(1+\eps))$ do not $\eps$-stab any pair of points of $P$.  Using \Cref{lemma:lightedge}, we pick as first edge the edge realized by a pair of points $x,y$ of $P$ which is not $\eps$-stabbed by more than $O(dn^{-\rho}|Q_0|)$ points of $Q_0$. Next, we create a multiset $Q_1$, where we duplicate points of $Q_0$ that $\eps$-stab $x,y$, and we place all remaining points of $Q_0$ intact. We repeat for $P\setminus \{x\}$ and $Q_1$.   

There are $\lceil n/2 \rceil$ iterations. In the end, $Q_{\lceil n/2 \rceil}$ has at most 
\[|Q_0| \cdot \left(1+dn^{-\rho}\right)^{\lceil n/2\rceil} \leq|Q_0| \cdot \left(1+d\left\lceil \frac{n}{2}\right\rceil^{-\rho}\right)^{\lceil n/2\rceil} \leq |Q_0|\cdot exp(dn^{1-\rho})\] points and, due to the duplication strategy,  any point in $Q_{\lceil n/2 \rceil}$ $\eps$-stabs at most 
\[\log |Q_0| + dn^{1-\rho}= O(d\log(1/\eps) + \log n +dn^{1-\rho})  \text{ edges.}\]  

To implement the above scheme, we first store $Q_0$ in a multiset sampling data structure $\cD_s$ as in \Cref{lemma:multisampling}. 
By standard bounds on the number of grid cells intersecting a ball (see, e.g.~\cite{HIM12}), we have $|Q_0|\leq n \cdot O(1/\eps)^d$. By \Cref{lemma:multisampling}, the running time needed to build the multiset sampling data structure is $\tilde{O}( n) \cdot O(1/\eps)^d$. 
% We also store $Q_0$ in a dictionary data structure $\cD_b$, where each cell is associated with the points in $P$ whose unit ball intersects that cell. This is also known as the bucketing method~\cite{HIM12} requiring building time in $O(dn)\cdot O(1/\eps)^d$. 
Now for each $i\in [\lceil n/2 \rceil]$, in the $i$th iteration, we detect a pair of points $x,y$ which is not $\eps$-stabbed by more than $O(dn^{-\rho}|Q_i|)$ points of $Q_i$ using $\cD_s$ and the algorithm of \cref{lemma:lightedge} in $\poly(n, d,\frac{1}{\eps})$ time (since the universe set consists of at most $n\cdot O(1/\eps)^d$ points), with probability of success $1-n^{-c-1}$. We now detect all points of $Q_i$ that $\eps$-stab $x,y$; this can be done by enumerating all points in $\left(B(x,(1+\eps))\cup B(y,(1+\eps))\right)\cap \cG_{\gamma}^d$ and excluding points that are near to both $x$ and $y$ in $O(1/\eps)^d$ time. For all points $\eps$-stabbing $x,y$, we update their weights in $\cD_s$, needing in total $O(1/\eps)^d$ time. Hence, the overall running time is in $\poly(n, d,\frac{1}{\eps})
 + \tilde{O}(n)\cdot O(1/\eps)^d$, and the probability of success is $1-n^{-c}$ by taking a union bound over the $\lfloor n/2\rfloor$ applications of the randomized algorithm of \cref{lemma:lightedge}.    
\end{proof}

% \Cref{lemma:lightedge,lemma:forest} prove that a forest with useful properties exists. In the following lemma, we show that we can actually find such a tree efficiently.

Finally, we are ready to show the main result of this section, the tractability of a spanning tree with a sublinear $\eps$-stabbing number. This follows from $O(\log n)$ iterative applications of \Cref{lemma:forest}. 
%The complete proof can be found in \Cref{app:thm:treeexists}.

\begin{restatable}{theorem}{thmtreeexists}\label{thm:treeexists}
    Let $P\subset \RR^d$ be a set of $n$ points. 
There is a spanning tree on $P$ such that every point of $\cG_{\gamma}^d$ $\eps$-stabs $\tilde{O}(dn^{1-\rho})$ edges, where $\gamma = \Theta(\eps/\sqrt{d})$, $\rho = \Theta(\eps^2/\log(1/\eps))$.

Moreover, for any $c>0$, we can compute such a spanning tree in $\poly(n, d,\frac{1}{\eps})
 + \tilde{O}(n)\cdot O(1/\eps)^d$ time, with probability of success $1-n^{-c}$.
\end{restatable}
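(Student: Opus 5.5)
We will build the spanning tree in $O(\log n)$ rounds, following Chazelle--Welzl's reduction from forests to spanning trees. Set $P_0=P$. In round $i$ we apply \Cref{lemma:forest} to $P_i$ and obtain a forest $F_i$ on $P_i$ with at least $|P_i|/2$ edges. From each connected component (tree) of $F_i$ we pick one representative point, and we let $P_{i+1}$ be the set of representatives. A forest on $|P_i|$ vertices with at least $|P_i|/2$ edges has at most $|P_i|/2$ components, so $|P_{i+1}|\le |P_i|/2$. After $k=\lceil\log_2 n\rceil$ rounds a single point remains, and we output $T=\bigcup_i F_i$.

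First we check that $T$ is a spanning tree. Every component of $F_i$ contains exactly one point of $P_{i+1}$, and later rounds only add edges between representatives. By induction, the union $F_0\cup\cdots\cup F_{i}$ is a forest whose components are in bijection with $P_{i+1}$. In particular, no edge of a later round closes a cycle, since it joins vertices lying in distinct components. At the end there is a single component, so $T$ is a spanning tree on $P$.

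Next we bound the stabbing number. Fix any $q\in \cG_\gamma^d$. Since $F_i$ is built on $P_i\subseteq P$ with $|P_i|\le n/2^i$, \Cref{lemma:forest} guarantees that $q$ $\eps$-stabs $\tilde O(d|P_i|^{1-\rho})\le \tilde O(d(n/2^i)^{1-\rho})$ edges of $F_i$. The grid $\cG_\gamma^d$ does not depend on the point set, so the lemma's guarantee applies to the same $q$ in every round. Summing over rounds gives a geometric series bounded by
\[
\tilde O(dn^{1-\rho})\sum_{i\ge 0}2^{-i(1-\rho)}=\tilde O(dn^{1-\rho}),
\]
because $\rho<1/2$. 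Even a crude bound of $k\cdot\tilde O(dn^{1-\rho})$ would suffice, since the $\log n$ factor is absorbed into $\tilde O$.

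Finally we handle running time and success probability. Each round costs $\poly(|P_i|,d,1/\eps)+\tilde O(|P_i|)\cdot O(1/\eps)^d$, and summing over the rounds preserves the stated bound. We invoke \Cref{lemma:forest} with constant $c+1$ in each round, and a union bound over the $O(\log n)$ rounds gives failure probability at most $n^{-c}$. There is one subtlety: the lemma's failure probability is stated in terms of the size of its input, and $|P_i|$ becomes small in late rounds. We handle this by running all rounds with the parameter $n$ rather than $|P_i|$, i.e., by boosting the success probability of each call to $1-n^{-c-1}$. The main obstacle is exactly this uniformity bookkeeping: making sure the constants hidden in $\gamma$, $\rho$ and $\tilde O$ are the same across the recursive calls. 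Once the grid and $\rho$ are fixed in terms of the original $n$ and $d$, the argument goes through.
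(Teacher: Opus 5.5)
Your proposal is correct and follows essentially the same route as the paper: iterate \Cref{lemma:forest} on one representative per tree, sum the geometric series of per-round stabbing bounds, and take a union bound over the $O(\log n)$ rounds. The only difference is the stopping rule. The paper stops once $\tilde{O}(dn^{1-\rho})$ trees remain, implicitly joining them arbitrarily at a cost of at most that many extra stabbed edges. This keeps every input to \Cref{lemma:forest} of size polynomial in $n$, so it avoids the small-$|P_i|$ success-probability issue that you handle by boosting each call to $1-n^{-c-1}$.
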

\begin{proof}
    We apply \Cref{lemma:forest} with probability of success $1-n^{-c-1}$, iteratively as follows.  We first find a forest having at least $n/2$ edges and  spanning at least $n/2$ points of $P$  such that every point of $\cG_{\gamma}^d$ $\eps$-stabs $\tilde{O}(dn^{1-\rho})$ edges. Then, for each tree, we keep an arbitrary representative point, and we apply  \Cref{lemma:forest} on the set of representative points. We stop when the number of trees is at most $\tilde{O}(dn^{1-\rho})$. Every point of $\cG_{\gamma}^d$ $\eps$-stabs at most 
    \[\tilde{O}\left(\sum_{j\geq 0} d\left(n\cdot 2^{-j}\right)^{1-\rho}\right) \subseteq  \tilde{O}(dn^{1-\rho})\text{ edges of the spanning tree.}\]  
    
    There are at most $O(\log n)$ iterations, hence the running time follows from \Cref{lemma:forest} and the probability of success follows by a union bound over the $O(\log n)$ executions of the algorithm of \Cref{lemma:forest}. 
\end{proof}

\section{Spherical Range Counting}\label{sec:sphrancou}

In this section, we show how to construct a partition tree for the approximate spherical range counting problem with sublinear query time. We first show how to get a partition tree with a sublinear visiting number given a spanning tree with a sublinear $\eps$-stabbing number, by extending the procedure in \cite{CW89} to accommodate the approximation scheme. Subsequently, we incorporate the data structures of \Cref{section:stabbing} in the inner nodes of the partition tree to get sublinear worst-case guarantees. 

We start by proving that having a spanning tree with low $\eps$-stabbing number can give us a spanning path with low $\eps$-stabbing number, which we then use to construct a partition tree with low visiting number. \Cref{lem:treepath,lem:pathtree} are adapted from \cite{CW89} for $\eps$-stabbing numbers, and the bounds become dependent on the ambiguity zone of the query. 
%For clarity and space efficiency, we move their proofs to \Cref{app:treepath,app:pathtree}.

\begin{restatable}{lemma}{lemtreepath}
\label{lem:treepath}
    Given a spanning tree $T$ on a pointset $P\subset \RR^d$,  
    we can build a spanning path $\ppath$ on $P$, in ${O}(dn)$ time, such that for any point $q$, $\sigma(q,\ppath)\leq 2\sigma(q,T)+2t_q$, where $t_q = |\left(B(q,1+\eps)\setminus B(q,1)\right)\cap P|$. 
    % with $\varepsilon$-stabbing number $\sigma(T)$, we can build a spanning path $\mathcal{P}$ on the same pointset with $\varepsilon$-stabbing number $\sigma(\mathcal{P})= 2\sigma(T)+k$, where $k$ is the size of the output of the query. I.e. given a spanning tree $T$ that is $\varepsilon$-stabbed at most $\sigma(T)$ times by a query $q$, we can build a spanning path $\mathcal{P}$ that is $\varepsilon$-stabbed at most $2\sigma(T)+k$ by the same query $q$, where $k$ is the size of the answer $q$ returns.
\end{restatable}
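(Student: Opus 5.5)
We follow the classical Chazelle--Welzl construction. Root $T$ at an arbitrary vertex and take a depth-first traversal, obtaining the closed walk $W=(v_1,\dots,v_{2n-1})$ that crosses every edge of $T$ exactly twice. Shortcutting $W$ to the first occurrences of the vertices yields a spanning path $\ppath=(u_1,\dots,u_n)$. Each edge $\{u_i,u_{i+1}\}$ of $\ppath$ replaces a subwalk $W_i$ of $W$ between $u_i$ and $u_{i+1}$, namely a path in $T$. Because the subwalks $W_1,\dots,W_{n-1}$ occupy consecutive, disjoint stretches of $W$, every tree edge belongs to at most two of them. Building the traversal and the path takes $O(n)$ steps, plus $O(d)$ per point to copy coordinates, for $O(dn)$ total.

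The main step is a charging argument for a fixed $q$. Call $x$ \emph{near} if $\|x-q\|\le 1$, \emph{far} if $\|x-q\|\ge 1+\eps$, and \emph{ambiguous} otherwise, so there are exactly $t_q$ ambiguous points. Suppose the path edge $\{u_i,u_{i+1}\}$ is $\eps$-stabbed by $q$, with one endpoint near and the other far. Walk along $W_i$ from the near endpoint to the far one. Let $a$ be the last near vertex on this walk, and let $b$ be the first far vertex after $a$. Every vertex strictly between $a$ and $b$ is ambiguous. There are two possibilities:
\begin{enumerate*}[label=(\roman*)]
\item $a$ and $b$ are adjacent on $W_i$, so the tree edge $\{a,b\}$ is $\eps$-stabbed; charge the path edge to that tree edge;
\item some ambiguous vertex $z$ appears on $W_i$ directly after $a$; charge the path edge to the occurrence of the traversal step $(a,z)$ in $W$, and hence to the ambiguous vertex $z$.
\end{enumerate*}

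Counting the charges gives the bound. Tree edges charged in case (i) are $\eps$-stabbed. Each such edge lies in at most two subwalks, so case (i) contributes at most $2\sigma(q,T)$. In case (ii), the charged object is a step of $W$ that enters the ambiguous vertex $z$. In a DFS closed walk, $z$ is entered once from its parent and once from each child returning, so the count of entries into $z$ is its degree in $T$. That is too many, so this is the step I expect to be the main obstacle.

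To fix it, restrict the charge to the \emph{first} ambiguous vertex on $W_i$, taken in the walk direction from $u_i$ to $u_{i+1}$. Then exploit the structure of $W_i$: it consists of an ascent from $u_i$ to the lowest common ancestor, followed by a single descent step to $u_{i+1}$, which is a child of some ancestor of $u_i$. On such a subwalk, any vertex other than the endpoints appears only as an ancestor of $u_i$ during the ascent. Now handle the two orientations. If the near endpoint comes first along the walk, the charged ambiguous vertex is reached during the ascent from $u_i$. If the far endpoint comes first, run the argument in reverse, charging the last near vertex and its successor. The aim is to show that each ambiguous vertex $z$ is charged at most twice: once when the walk finishes $z$'s subtree and ascends through $z$, and once when $z$ itself is the endpoint $u_{i+1}$ that starts a new subwalk. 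This gives case (ii) a total of at most $2t_q$ and completes $\sigma(q,\ppath)\le 2\sigma(q,T)+2t_q$.

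If this accounting fails, the fallback is a cruder but valid bound of the same form: ambiguous vertices are visited in a single contiguous DFS pass per ancestor chain. Using this, charge each path edge in case (ii) to the pair ($z$, direction of traversal), which still gives at most $2t_q$ charges.
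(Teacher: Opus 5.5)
Your construction (DFS preorder, the subwalks $W_i$ of the Euler tour) and your charging of path edges whose tree path contains an $\eps$-stabbed tree edge (at most $2\sigma(q,T)$ of them) are the same as the paper's. The gap is the step you flagged, and your proposed fix does not close it. On $W_i$ the ascent from $u_i$ ends at $w$, the parent of $u_{i+1}$, and this turning vertex is shared. For every pair of consecutive children $c_j,c_{j+1}$ of $w$, the subwalk from the last preorder vertex of the subtree of $c_j$ to $c_{j+1}$ climbs to $w$. So $w$ is an interior vertex of up to $\deg_T(w)-1$ subwalks, not of one (``when the walk finishes $z$'s subtree'') plus one (as an endpoint). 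Your original count of $\deg_T(z)$ entries was the truth.

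Concretely, let $T$ be a star with centre $z$, $\|z-q\|=1+\eps/2$. Let its leaves $c_1,\dots,c_k$ lie, in DFS order, alternately at distance $\le 1$ and $\ge 1+\eps$ from $q$; since the child order is arbitrary, this is realisable already on a line. Then $\sigma(q,T)=0$ and $t_q=1$, but $\sigma(q,\ppath)=k-1$. Every one of these path edges is charged to $z$ under your rule. The fallback with (vertex, direction) pairs fails for the same reason, since there are only two directions per vertex.

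A smarter path cannot repair the constant either, so this is not only a defect of the argument. In $\RR^k$ take $z=0$ and $c_j=\beta e_j$, with $\beta=(1+\eps/2)/\sqrt{k}$ and $k(\eps+\eps^2/4)\le(1+\eps/2)^2$. For any signs $s\in\{\pm1\}^k$, the query $q=\beta\sum_j s_je_j$ satisfies:
\begin{itemize}
\item $\|q-z\|=1+\eps/2$;
\item $\|q-c_j\|\le1$ if $s_j=1$;
\item $\|q-c_j\|\ge1+\eps$ if $s_j=-1$.
\end{itemize}
Any spanning path has at least $k-2$ leaf--leaf edges, and they form a linear forest. Colour that forest properly with two colours and pick the corresponding $q$; it stabs all of these edges. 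For $k=5$ and $\eps\le\sqrt{5}-2$ this gives $\sigma(q,\ppath)\ge 3>2=2\sigma(q,T)+2t_q$.

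The paper's proof rests on the same unproved assertion (``a node in the annulus is charged at most twice''). What your charging does establish is
\[
\sigma(q,\ppath)\le 2\sigma(q,T)+\sum_{z}\deg_T(z),
\]
where the sum is over points $z\in P$ with $1<\|z-q\|\le 1+\eps$. Each remaining stabbed path edge has an ambiguous vertex in the interior of its $W_i$. Such an interior occurrence is a non-first occurrence in the Euler tour, and $z$ has at most $\deg_T(z)$ of those. An $O(t_q)$ bound would need extra structure, for example a degree bound on $T$.
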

\begin{proof}
    We get the spanning path by running a depth-first traversal of $T$. Let an edge $e$ of the traversal be $\varepsilon$-stabbed by a query $q$. If $e\in T$ we are done, else $e$ creates a cycle with edges of $T$. We have 2 cases regarding the $\eps$-stabbing number of the cycle:
    \begin{itemize}
        \item The number is at least 2 ($q$ $\eps$-stabs at least one edge of the tree), in which case we can charge an edge of the tree which is $\eps$-stabbed.
        \item The number is 1 ($q$ only $\eps$-stabs the edge that closes the cycle). In order for this to be true, there must be a number of points of the cycle that are on the annulus centered at $q$, i.e., be at distance between 1 and $1+\eps$ from $q$. 
        %However, as these points are within $1+\varepsilon$ distance from $q$, they can be included in the output of the query.
    \end{itemize}

    It follows from the first case that because of the depth-first traversal, we can charge each edge at most twice. Additionally, by the same argument, a node in the annulus of $q$ is charged at most twice. This is because in a depth-first order, every time the tree crosses the annulus, the path constructed from the order crosses the annulus at most twice (once going down the branch, and once when it retracts). We conclude that $\sigma(q,\ppath)\leq 2\sigma(q,T)+2t_q$. The running time is $O(dn)$ since we only run a depth-first search on a tree. 
\end{proof}

\begin{restatable}{lemma}{lempathtree}
\label{lem:pathtree}
    Given a spanning path $\ppath$ 
   on a pointset $P\subset \RR^d$ we can build a binary partition tree $\ptree$ on $P$, in $(dn\log n)$ time, such that for any point $q$,
   we have 
   $\zeta(q,\ptree)\leq 2[\sigma(q,\ppath)+t_q]\cdot \log n+1$, where $t_q = |\left(B(q,1+\eps)\setminus B(q,1)\right)\cap P|$.
\end{restatable}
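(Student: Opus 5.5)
The plan is to build $\ptree$ as the balanced binary tree over the order of the spanning path $\ppath=(p_1,\ldots,p_n)$. Every node $v$ is associated with a contiguous subpath $P_v=\{p_i,\ldots,p_j\}$. The root takes the whole path, and each internal node splits its interval into two halves of sizes $\lceil\cdot/2\rceil$ and $\lfloor\cdot/2\rfloor$. The depth is then $\lceil\log n\rceil$, and building the tree only requires reading off the order of $\ppath$, which takes $O(dn\log n)$ time: each of the $O(\log n)$ levels copies $O(n)$ points of dimension $d$.

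\textbf{Counting visited nodes.} A non-root node is counted in $\zeta(q,\ptree)$ only if its parent $u$ is \emph{active}, meaning $P_u$ satisfies one of the three conditions in the definition. Since $\ptree$ is binary, $\zeta(q,\ptree)\le 1+2\,|\{u: u \text{ active}\}|$. It therefore suffices to show that on each level of $\ptree$ the number of active nodes is at most $\sigma(q,\ppath)+t_q$. Summing over the at most $\log n$ levels that contain internal nodes then gives the claimed bound $2[\sigma(q,\ppath)+t_q]\log n+1$.

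\textbf{Charging on a fixed level.} The nodes on one level have disjoint intervals of $\ppath$. Take an active node $u$ and consider its three cases.
\begin{itemize}
\item If $P_u$ is $\eps$-stabbed, it contains a point $x$ with $\|x-q\|\le1$ and a point $y$ with $\|y-q\|\ge1+\eps$. Walk along the subpath from $x$ to $y$. Either some consecutive pair on this walk is itself $\eps$-stabbed, and we charge $u$ to that edge of $\ppath$, which lies inside the interval of $u$. Or every step avoids being $\eps$-stabbed, which forces the walk to pass through a point in the annulus $B(q,1+\eps)\setminus B(q,1)$, and we charge $u$ to that annulus point, which also lies in $P_u$.
\item In the other two cases, $P_u$ contains a point of the annulus by definition, and we charge $u$ to it.
\end{itemize}
In every case the charged object (an edge or a point) lies inside the interval of $u$. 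Intervals on one level are disjoint, so each stabbed edge and each annulus point is charged at most once per level. Hence each level has at most $\sigma(q,\ppath)+t_q$ active nodes.

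\textbf{Main obstacle.} The delicate step is the walk argument in the stabbed case: showing that a path from a point within distance $1$ to a point at distance at least $1+\eps$ must either contain an $\eps$-stabbed edge or visit an annulus point. Suppose no point of the walk lies in the annulus. Then every point is either near (distance $\le1$) or far (distance $\ge 1+\eps$). The walk starts near and ends far, so at some step a near point is followed by a far point, and that edge is $\eps$-stabbed. This discrete intermediate-value argument is what gives the dichotomy.

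Two smaller points need care. Only edges whose endpoints both lie in $P_u$ should be charged, which holds automatically because the walk stays inside the contiguous interval. And the leaf level should be handled correctly: leaves are counted only as children of active nodes and never as active nodes themselves, which is already accounted for by the factor $2$ and the $+1$ for the root.
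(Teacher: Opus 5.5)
Your proposal is correct and follows the paper's proof: a balanced binary tree over the path order, where each internal node with visited children is charged to an $\eps$-stabbed path edge or an annulus point inside its interval, summed over levels with a factor $2$ for the children plus $1$ for the root. Your discrete intermediate-value step also makes explicit what the paper leaves implicit, namely that an $\eps$-stabbed interval containing no annulus point must contain an $\eps$-stabbed path edge. One nit, which the paper shares: there are $\lceil \log n\rceil$ levels of internal nodes, so strictly the bound holds with $\lceil \log n\rceil$ in place of $\log n$; for $n=3$ with $p_1$ near and $p_2,p_3$ far, one gets $\zeta=5>2\log 3+1$.
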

\begin{proof}
We build a complete binary tree, and assign the points along the path to the leaves from left to right sequentially. This takes $O(dn\log n)$ time. 

Let $v$ be an inner node. 
For the two children nodes of $v$ to be visited by $q$, one of the following must be true: \begin{itemize}
    \item the subset of points $P_v$ assigned to the leaves of the subtree rooted at $v$ is $\eps$-stabbed by $q$
    \item there exists a $p\in P_v$ such that $p\in B(q,1+\eps)\setminus B(q,1)$.
\end{itemize}

As there are at most $\sigma(q,\ppath)$ edges that are $\varepsilon$-stabbed by $q$ and at most $t_q$ points that can be on the annulus of $q$, then there are at most $[\sigma(q,\ppath)+t_q]\cdot \log n$ inner nodes that are visited by $q$. When an inner node is $\eps$-visited, we continue our search to both children. These are all the nodes that can be visited, plus the root. Therefore, the number of nodes visited for a query point $q$ can be at most $2[\sigma(q,\ppath)+t_q]\cdot \log n+1$.
\end{proof}

We combine these results to get a bound on the visiting number of a partition tree. %The proof is straightforward and can be found in \Cref{app:ptreevis}.

\begin{restatable}{corollary}{corptreevis}\label{cor:ptreevis}
    Let $P\subset \RR^d$ be a set of $n$ points. 
    For any $c>0$, we can build a binary partition tree $\ptree$ on $P$, in $\poly(n, d,\frac{1}{\eps}
) + \tilde{O}(n)\cdot O(1/\eps)^d$ time such that with probability $1-1/n^c$ every $q\in \cG_{\gamma}^d$, with $\gamma=\eps/\sqrt{d}$, visits $\zeta(q,\ptree)=\tilde{O}(dn^{1-\rho}+t_q)$ nodes, where $t_q = |\left(B(q,1+\eps)\setminus B(q,1)\right)\cap P|$ and $\rho = \Theta(\eps^2/\log(1/\eps))$.
\end{restatable}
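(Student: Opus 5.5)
The corollary follows by chaining the three constructions of this section. First, apply \Cref{thm:treeexists} with success probability $1-n^{-c}$. This gives a spanning tree $T$ on $P$ such that every $q\in\cG_{\gamma}^d$, with $\gamma=\Theta(\eps/\sqrt{d})$, satisfies $\sigma(q,T)=\tilde{O}(dn^{1-\rho})$. The running time is $\poly(n,d,\frac{1}{\eps})+\tilde{O}(n)\cdot O(1/\eps)^d$, which is exactly the claimed preprocessing bound. We fix the grid side length to the value used there; the statement's $\gamma=\eps/\sqrt{d}$ is the same up to the constant hidden in $\Theta(\cdot)$.

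Second, feed $T$ into \Cref{lem:treepath} to obtain, in $O(dn)$ time, a spanning path $\ppath$. For every point $q$, and in particular every $q\in\cG_{\gamma}^d$,
\[
\sigma(q,\ppath)\leq 2\sigma(q,T)+2t_q=\tilde{O}(dn^{1-\rho})+2t_q .
\]
Third, apply \Cref{lem:pathtree} to $\ppath$, which builds a binary partition tree $\ptree$ in $O(dn\log n)$ time with
\[
\zeta(q,\ptree)\leq 2\bigl[\sigma(q,\ppath)+t_q\bigr]\log n+1\leq 2\bigl[\tilde{O}(dn^{1-\rho})+3t_q\bigr]\log n+1 .
\]
The $\log n$ factor is absorbed into $\tilde{O}$, so $\zeta(q,\ptree)=\tilde{O}(dn^{1-\rho}+t_q)$. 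Both \Cref{lem:treepath} and \Cref{lem:pathtree} are deterministic and cost $O(dn\log n)$, which is dominated by the first step. Hence the total time is as stated, and the only failure event is the one from \Cref{thm:treeexists}, which has probability at most $n^{-c}$.

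The one point that needs care is the quantifier over queries. The guarantee of \Cref{thm:treeexists} holds simultaneously for all grid points once the randomized construction succeeds, because the MWU argument bounds the stabbing count of every point in $Q_0$ at once, and grid points outside $Q_0$ stab no edge at all. \Cref{lem:treepath} and \Cref{lem:pathtree} then hold pointwise for every $q$ deterministically. So the conclusion holds for all $q\in\cG_{\gamma}^d$ with probability $1-n^{-c}$, which is exactly what the corollary asserts.
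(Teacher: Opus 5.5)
Your proposal is correct and matches the paper's proof: you chain \Cref{thm:treeexists}, \Cref{lem:treepath} and \Cref{lem:pathtree} to get $\zeta(q,\ptree)\leq 2\bigl[\tilde{O}(dn^{1-\rho})+3t_q\bigr]\log n+1$. Your two extra remarks are points the paper leaves implicit. One is that the guarantee holds simultaneously for all grid points. The other is the grid side length, which is cleanest handled by instantiating \Cref{thm:treeexists} directly with $\gamma=\eps/\sqrt{d}$, as its proof allows.
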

\begin{proof}
Using \Cref{thm:treeexists}, 
for any $c>0$, we can compute a spanning tree $T$ in $ \poly(n, d,\frac{1}{\eps}
) + \tilde{O}(n)\cdot O(1/\eps)^d$ time, such that with probability of success $1-n^{-c}$, for any $q\in \cG_{\gamma}^d$, $\sigma(q,P)=\tilde{O}(dn^{1-\rho})$. We then transform $T$ into a spanning path $\ppath$ using \Cref{lem:treepath}, and finally transform it into a binary partition tree $\ptree$ using \Cref{lem:pathtree}:
    \begin{align*}
        \zeta(q,\ptree) &\leq 2[\sigma(q,\ppath)+t_q]\cdot \log n+1\leq 2[2\sigma(q,T)+2t_q+t_q]\cdot \log n +1\\ &\leq \tilde{O}(dn^{1-\rho})+6t_q\log n+1=\tilde{O}(dn^{1-\rho}+t_q).
    \end{align*}
    %Which follows by applying \Cref{lem:pathtree}, \Cref{lem:treepath,thm:treeexists}.
\end{proof}

Finally, we present the main contribution of our paper, which states that for the approximate spherical range counting problem, we can build a data structure in polynomial time that supports sublinear queries (if $t_q$ is also sublinear) and takes near-linear space. In our proof, we make use of \Cref{thminternalnode,thm:treeexists,lem:pathtree}.

\begin{theorem}
\label{thm:final}
Let $P\subset \RR^d$ be a set of $n$ points, such that each $p\in P$ is assigned a weight $w_p\in \RR$, and let $\eps\in (0,1)$. We can build a data structure for the approximate range counting problem with input $P,\ \{w_p\mid p\in P\},\ \eps$, that uses space in $\tilde{{O}}(n)$, needs preprocessing time in $O(dn)+n^{O(\eps^{-2}\log(1/\eps))}$
and for each query $q\in \RR^d$, it returns an answer in 
    $ n^{1-\Theta(\eps^4/\log(1/\eps))}+t_q^{\varrho}\cdot n^{1-\varrho}$ time, where $\varrho=\Theta(\varepsilon^2)$ and $t_q = |\left(B(q,1+\eps)\setminus B(q,1)\right)\cap P |$.
    For each query, the preprocessing succeeds with probability $1-n^{-c}$, where $c>0$ is a constant hidden in the aforementioned time and space complexities. 
\end{theorem}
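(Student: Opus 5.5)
We plan to build the data structure in three stages: reduce the dimension, construct a partition tree with low visiting number on the reduced points, and attach the approximate stabbing structures of \Cref{thminternalnode} to the internal nodes.

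\emph{Dimension reduction.} The preprocessing time $n^{O(\eps^{-2}\log(1/\eps))}$ suggests first projecting $P$ with a Gaussian JL map (\Cref{lem:JL}) to dimension $k=O(\eps^{-2}\log n)$. This costs $O(dn)$ per coordinate block, which accounts for the $O(dn)$ term. Rescaling $\eps$ by a constant, the $\eps$-stabbing relations between a fixed query and the $n$ points are preserved with probability $1-n^{-c}$ for that query. In dimension $k$, the grid $\cG_{\gamma}^k$ with $\gamma=\Theta(\eps/\sqrt{k})$ restricted to $\bigcup_p B(p,1+\eps)$ has $n\cdot O(1/\eps)^{k}=n^{O(\eps^{-2}\log(1/\eps))}$ points. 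Hence \Cref{cor:ptreevis} runs within the stated preprocessing bound and yields a binary partition tree $\ptree$ with $\zeta(q',\ptree)=\tilde O(kn^{1-\rho}+t_{q'})$ for every grid point $q'$, where $\rho=\Theta(\eps^2/\log(1/\eps))$. A real query $q$ is snapped to its nearest grid point $q'$, which lies within distance $\eps/2$ times a small constant of $q$. We run \Cref{cor:ptreevis} with a constant-factor smaller $\eps$ so that $\eps$-stabbing by $q$ implies $\eps'$-stabbing by $q'$, and so that $t_{q'}$ is bounded by the number of points in a slightly enlarged annulus. I expect this bookkeeping with shrunken annuli to be fiddly but routine. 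Alternatively, one can snap in the original space and only use the projection for the grid.

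\emph{Attaching node structures and answering queries.} At each internal node $v$ we store $w_{P_v}$ and a copy of the structure from \Cref{thminternalnode} built on $P_v$. Since $\ptree$ is a balanced binary tree, each point appears in $O(\log n)$ nodes, so space and build time are $\tilde O(n)$ in the reduced dimension. A query descends from the root. At each node it asks the stabbing structure: on ``covered'' it adds $w_{P_v}$, on ``disjoint'' it discards the node, and on ``stabbed'' it recurses, with leaves decided directly. Correctness follows from the case table of \Cref{thminternalnode}. ``Covered'' is only returned when all points are within $1+\eps$, ``disjoint'' only when all points are beyond $1$, and every truly stabbed node is expanded. A union bound over the $O(n)$ visited node structures for the fixed query gives success probability $1-n^{-c}$. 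The nodes at which we recurse are exactly those counted by $\zeta$, since a wrong ``stabbed'' answer requires a point in the annulus.

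\emph{Query time (main obstacle).} Each visited node $v$ costs $|P_v|^{1-\varrho'}$ with $\varrho'=\eps^2/(19200(1+\eps^2))$, up to polylogarithmic factors. We must bound $\sum_{v\text{ visited}}|P_v|^{1-\varrho'}$ using $\zeta\le \tilde O(n^{1-\rho}+t_q)$ together with the fact that at most $2^i$ nodes of size $n/2^i$ exist per level. We split by levels. At level $i$ the number of visited nodes is at most $\min\{2^i,\zeta\}$, each of cost $(n/2^i)^{1-\varrho'}$. Summing, the sum is dominated by the level with $2^i\approx\zeta$, giving roughly $\zeta^{\varrho'}n^{1-\varrho'}$. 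Substituting $\zeta\lesssim n^{1-\rho}+t_q$ and using $(a+b)^{\varrho'}\le a^{\varrho'}+b^{\varrho'}$ yields
\[
n^{(1-\rho)\varrho'}n^{1-\varrho'}+t_q^{\varrho'}n^{1-\varrho'}=n^{1-\rho\varrho'}+t_q^{\varrho'}n^{1-\varrho'}.
\]
Since $\rho\varrho'=\Theta(\eps^4/\log(1/\eps))$, this matches the claim with $\varrho=\varrho'=\Theta(\eps^2)$. The delicate parts are making the level-counting argument rigorous, since visited nodes need not be concentrated at one level, and absorbing the polylog and $k$ factors into the $\Theta$ exponents.
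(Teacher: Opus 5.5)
Your proposal follows the paper's proof essentially step for step: JL to $O(\eps^{-2}\log n)$ dimensions, snapping queries to $\cG_{\gamma}$, building the tree via \Cref{cor:ptreevis} with a constant-factor smaller error, storing $s_v$ and a \Cref{thminternalnode} structure at each internal node, and bounding level $i$ by $\min\{2^i,\zeta\}(n/2^i)^{\beta}\le \zeta^{1-\beta}n^{\beta}$ before splitting $\zeta=\tilde O(n^{\alpha}+t_q)$. The one point to state correctly is the snapping bookkeeping: as in the paper (shift the radius to $1+\eps/5$, rescale, and use error $\eps/2$), the parameters must make the snapped query's annulus lie inside $B(q,1+\eps)\setminus B(q,1)$, which gives $t_{q'}\le t_q$, whereas a bound via a slightly enlarged annulus around $q$ would not give the stated dependence on $t_q$.
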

\begin{proof}
We first apply a JL transformation~\cite{JL84} to reduce the dimension to $O(\eps^{-2}\log n)$. By \Cref{lem:JL}, and by applying a union bound over $n$ vectors, for any $c>0$ and any query $q\in \RR^d$, with probability $1-1/n^{c}$, all distances from $P$ are approximately preserved within factors $1\pm \eps/10$. This initial step costs $O(dn)$ time. For the rest of the proof, we assume that all points lie in $\RR^{O(\eps^{-2}\log n)}$. Then, we assume that queries are snapped to the grid $\cG_{\gamma}^d$, $\gamma =\eps/(10\sqrt{d})$. This induces an additive $\pm \eps/10$ error to the distances between $P$ and the query. Hence, points within distance $1$ from the query may be moved to a distance up to $1+\eps/5$, and points at a distance of at least $1+\eps$ may be moved to a distance of at least $1+4\eps/5$. Notice that $1+\eps/2\leq \frac{1+4\eps/5}{1+\eps/5}$, hence it suffices to use a more refined error of $\eps/2$ for the rest of the construction and assume that the space is rescaled by a factor $1/(1+\eps/5)$.  
%\yiannis{not super happy here}

We build a binary partition tree $\ptree$ on the input with error parameter $\eps/2$ using \Cref{cor:ptreevis}. The construction of $\ptree$ can be done in $n^{O(\eps^{-2}\log(1/\eps))}$ time and be successful for a fixed $q$, in that $\zeta(q,\ptree)\leq n^{1-\Theta(\eps^2/\log(1/\eps))}+t_q$, with  probability $1-1/n^c$ for any constant $c>0$. 
%as follows: we first compute a spanning tree using \Cref{thm:treeexists}, then transform it into a spanning path using \Cref{lem:treepath}, and finally transform it into a binary partition tree using \Cref{lem:pathtree}. 
For each internal node $v$, let $P_v$ be the set of points in the leaves of the subtree rooted at $v$. 
In each internal node $v$, we build a data structure $\cD_v$ as in \Cref{thminternalnode} on $P_v$ with error parameter $\eps/2$, and we store the cumulative weight $s_v$ of the points in $P_v$, i.e., $s_v = \sum_{p\in P_v} w_p $. 
%By \Cref{thm:treeexists} (where we set $d\gets O(\eps^{-2}\log n)$), and the fact that \Cref{lem:treepath} and \Cref{lem:pathtree} apply near-linear time transformations, 
By \Cref{thminternalnode}, each construction of a data structure $\cD_v$, can be done in $\tilde{O}(d|P_v|)$ time and be successful with probability $1-1/n^c$ for any constant $c>0$. 
By a union bound over all inner nodes,  we claim that all data structures $\cD_v$ are also successful for $q$ with probability $1-\Theta\left(\frac{n\log n}{n^c}\right)$. By \Cref{thminternalnode}, the time and space needed to build data structures for the inner nodes is near-linear per level of the tree (since each level defines a partition). 
%Space usage follows from \Cref{thminternalnode} and the fact that each level of the tree defines a partition. 

To answer a query $q\in \RR^d$, we traverse the tree and collect all cumulative weights $s_v$ in each internal node $v$, whose associated data structure $\cD_v$ returns ``covered". We continue traversing the children of a node $v$ if and only if  $\cD_v$ returns ``stabbed". By \Cref{cor:ptreevis}, %\Cref{thm:treeexists},  \Cref{lem:treepath} and \Cref{lem:pathtree}, 
assuming that the tree has been constructed correctly for  $q\in \RR^d$, the number of nodes visited during the query algorithm for input $q\in \RR^d$ is 
$\tilde{O}(n^{\alpha}+t_q)$, where $\alpha = 1- \Theta(\eps^2/\log(1/\eps))$. By \Cref{thminternalnode}, the time needed to query a data structure $\cD_v$ in an internal node $v$ is  
$\tilde{O}(|P_v|^{\beta})$, where $\beta = 1-\Theta(\eps^2)$.
At each level $i$ of $\ptree$, the number of nodes whose data structures are queried is upper bounded by both $2^i$ and $\zeta(q,\ptree)$.  
Hence, the running time at level $i$ for query $q$ is $\tilde{O}(T_i^q(n))$, where $T_i^q(n)$ is defined as follows:
\[
T_i^q(n) = \min\Bigl\{\, 2^{i},
\; n^{\alpha} +t_q   \Bigr\}\cdot \left(\frac{n}{2^{i}}\right)^{\beta} \leq (n^{\alpha}+t_q)^{1-\beta} \cdot n^{\beta},
\] 
because if $i\leq \log(n^{\alpha}+t_q)$ then 
$\min\Bigl\{\, 2^{i},
\; n^{\alpha} +t_q   \Bigr\} = 2^i$ and if $i>\log(n^{\alpha}+t_q)$ then 
$\min\Bigl\{\, 2^{i},
\; n^{\alpha} +t_q   \Bigr\} = n^{\alpha} +t_q$. 
Now, since $1-\beta <1$, 
\[
T_i^q(n)\leq n^{\alpha(1-\beta)+\beta}+t_q^{1-\beta}\cdot n^{\beta}
=    n^{1-\Theta(\eps^4/\log(1/\eps))}+t_q^{\varrho}\cdot n^{1-\varrho},
\]
where $\varrho=1-\beta$.
We observe that the query algorithm will not explore any descendants of an internal node $v$ 
if $\cD_v$ returns ``disjoint" when $\min_{x\in P_v} \|x-q\| >1$, or 
if $\cD_v$ returns ``covered" when $\max_{x\in P_v} \|x-q\| \leq 1+\eps/2$, in which case it takes into account $s_v$. Thus, the answer is calculated by summing over cumulative weights $s_{v_1},\ldots,s_{v_{\ell}}$ of nodes corresponding to subsets $P_{v_1},\ldots, P_{v_{\ell}}$ such that 
$B(q,1)\cap P\subseteq \bigcup_{i\in [\ell]}P_{v_i} \subseteq B(q,1+\eps/2)\cap P$. 
% \yiannis{under construction: Query time}
%     visiting number = $\tilde{O}(n^{\alpha}+t_q)$, $\alpha = 1- \Theta(\eps^2/\log(1/\eps))$, and 
% query time per node = $\tilde{O}(n^{\beta})$, where $\beta = 1-\Theta(\eps^2)$
% The running time at level $i$ for query $q$ is $\tilde{O}(T_i^q(n))$, where $T_i^q(n)$ is defined as follows:
% \[
% T_i^q(n) = \min\Bigl\{\, 2^{i},
% \; n^{\alpha} +t_q   \Bigr\}\cdot \left(\frac{n}{2^{i}}\right)^{\beta}.
% \]
% \[
% \text{If } i \ge \log (n^{\alpha} +t_q),
% \quad\text{then}\quad
% T_i \leq (n^{\alpha}+t_q)^{1-\beta} \cdot n^{\beta}
% \]
% \[
% \text{If } i < \log (n^{\alpha} +t_q),
% \quad\text{then}\quad
% T_i \leq (n^{\alpha}+t_q)^{1-\beta} \cdot n^{\beta}.
% \]
% Since $1-\beta <1$, 
% \begin{align*}
% T_i&\leq n^{\alpha(1-\beta)+\beta}+t_q^{1-\beta}\cdot n^{\beta}\\
% &=    n^{1-\Theta(\eps^4/\log(1/\eps))}+t_q^{\Theta(\eps^2)}\cdot n^{1-\Theta(\eps^2)}
% \end{align*}
\end{proof}

\section{Data-driven Range Searching}\label{sec:datadr}

%\andreas{TODO: check whether visiting number accounts for children of stabbed node sets AND children of nodesets thay have all points are in 1+eps or all more than 1}
As we saw in the previous section, we can construct the partition tree in polynomial time. However, the exponent depends on $1/\eps$, which we would like to avoid. In this section, we present an algorithm inspired by learning theory that constructs a partition tree with near-optimal visiting number in expectation, over the distribution of queries. Specifically, the algorithm will have access to a number $m$ of sample queries, and will construct a partition tree that is optimal for that sample - thus the algorithm is data-driven. Showing that uniform convergence is achieved, we prove that this partition tree is nearly optimal, in that its expected visiting number is nearly minimal.

In the next lemma, we show that using only a near-linear number of queries we can achieve uniform convergence for the expected $\eps$-stabbing number, i.e., for any spanning tree, the estimation of the $\eps$-stabbing number using only these queries is close to its real expected $\eps$-stabbing number.

\begin{restatable}{lemma}{lemmaspanningtreespreserved}
\label{lemma:spanningtreespreserved}
Let $P$ be a set of $n$ points in $\mathbb{R}^d$ and let $\cD_Q$ be a ball query distribution having as support a countably infinite set $\cR$.  
%Let $D$ be the dual VC dimension of the range space $(P,\cR)$.  
It suffices to sample $m \in O(n\cdot (d\log n+\log(1/\delta)))$ queries $S$ independently from $\cD_Q$, so that with probability at least $1-\delta$,   
for any spanning tree $\stree$ of $P$, $m^{-1}\cdot  \sum_{q\in S} \sigma(q,\stree)$ is in 
$
 \left[\frac{5}{8}\cdot \EE_{q\sim\cD_Q} \left[\sigma(q,\stree)\right] - \frac{3}{8} , \frac{11}{8}\cdot \EE_{q\sim\cD_Q} \left[\sigma(q,\stree)\right] + \frac{3}{8} \right]$.
\end{restatable}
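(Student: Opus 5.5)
We reduce the statement about spanning trees to a uniform-convergence statement about individual edges, and then invoke \Cref{theo:etaepsilonapproxfunctionsrelative}.

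\textbf{Step 1: edge indicators.} For each pair $e=\{x,y\}\subseteq P$, let $f_e:\cR\to\{0,1\}$ be the indicator that $q$ $\eps$-stabs $e$. For any spanning tree $\stree$,
\[
\sigma(q,\stree)=\sum_{e\in \stree} f_e(q).
\]
So the empirical and true means of $\sigma(\cdot,\stree)$ are sums, over the $n-1$ edges of $\stree$, of the corresponding means of the $f_e$.

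\textbf{Step 2: VC dimension.} Consider the class $\cF=\{f_e\}$ as a class of functions on query points. Each $f_e$ is determined by the signs of the four degree-2 polynomials $\|q-x\|^2-1$, $\|q-x\|^2-(1+\eps)^2$, $\|q-y\|^2-1$ and $\|q-y\|^2-(1+\eps)^2$. By the same argument as in \Cref{lemma:lightedge} (\cite{AB02}, Theorem 8.3), the class has VC dimension $D=O(d)$. In fact $\cF$ is finite, with at most $n^2$ members, so $D\le 2\log n$ would also suffice.

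\textbf{Step 3: apply the corollary.} Apply \Cref{theo:etaepsilonapproxfunctionsrelative} to $\cF$ with distribution $\cD_Q$ and $\eta=1/n$. Then
\[
m=O\bigl(n(d\log n+\log(1/\delta))\bigr)
\]
samples suffice so that, with probability at least $1-\delta$, simultaneously for every pair $e$,
\[
\bigl|\mu_{f_e}(S)-\EE[f_e]\bigr|\le \tfrac38\max\{\EE[f_e],1/n\}\le \tfrac38\EE[f_e]+\tfrac{3}{8n}.
\]

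\textbf{Step 4: sum over edges.} On this event, fix any spanning tree $\stree$ and sum the edge bounds over its $n-1$ edges, using linearity of expectation. This gives
\[
\Bigl|m^{-1}\textstyle\sum_{q\in S}\sigma(q,\stree)-\EE[\sigma(q,\stree)]\Bigr|\le \tfrac38\EE[\sigma(q,\stree)]+\tfrac{3(n-1)}{8n}.
\]
Since $\tfrac{3(n-1)}{8n}\le\tfrac38$, the empirical mean lies in the interval stated in the lemma. Because the event holds for all pairs at once, the conclusion holds for every spanning tree simultaneously.

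\textbf{Main obstacle.} The delicate point is the choice of $\eta$. The additive errors accumulate over $n-1$ edges, so $\eta$ must be about $1/n$ to keep the total additive term at most $3/8$. This choice is exactly what forces the factor $n$ in the sample size. A direct union bound over all $n^{n-2}$ spanning trees would cost an extra $\log$ factor and needs care, which is why we work with edges and linearity instead.
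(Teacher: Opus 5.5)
Your proposal is correct and follows the paper's proof. The paper also uses per-edge indicators $f_{ab}$ with VC dimension $O(d)$, applies \Cref{theo:etaepsilonapproxfunctionsrelative} with $\eta=1/n$, and sums the per-edge bounds over the $n-1$ tree edges. The differences are cosmetic: you bound $\max\{\EE[f_e],1/n\}\le \EE[f_e]+1/n$ where the paper splits edges into light and heavy ones, and you get the VC bound from the polynomial sign-pattern argument (or the finiteness of $\cF$) where the paper uses the dual VC dimension of balls.
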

\begin{proof}
    We apply \Cref{theo:etaepsilonapproxfunctionsrelative} with $\eta = 1/n$ on the set of functions $\cF$ defined by the pairs of points $P\times P$ and the domain $\mathbb{R}^d$ corresponding to the set of ball queries $\cR$. For each $a,b\in P$, we have a function $f_{ab}\in \cF$ such that for any query centered at $q$: $f_{ab}(q) = 1 \iff $ $q$ $\eps$-stabs $\{a,b\}$. It remains to bound the VC dimension of the range space $(\cR, \cF)$. Let $\cF^- = \{f_a^-\mid  a\in P\}$ ($\cF^+ = \{f_a^+\mid  a\in P\}$) be a set of functions defined such that for any query $q$: $f_a^-(q)=1\iff \|a- q\|\leq 1-\eps$ ($f_a^+(q)=1\iff \|a- q\|\leq 1+\eps$).
    Notice that $f_{ab}(q)=1 \iff (f_a^-(q)=1 \land f_b^+(q)=0) \lor (f_a^+(q)=0 \land f_b^-(q)=1)$.
    The dual VC dimension of $(P,\cR)$ is $d+1$ (\cite{AB02}), meaning that the VC dimension of $(\cR,\{f_a^- \mid a\in P\})$ and $(\cR,\{f_a^+ \mid a\in P\})$ is also $d+1$, implying that the VC dimension of $(\cR, \cF)$ is $O(d)$ (since each function of $\cF$ is defined as $O(1)$ and/or operations on functions of $\cF^-$ and $\cF^+$). Let $S=(x_1,\ldots,x_m)$ be as in \Cref{theo:etaepsilonapproxfunctionsrelative}, i.e., each $x_i$ is a query sampled independently at random from $\cD_Q$, and $m\in O\left(n \left( d \log \left(n\right) + \log\left(\frac{1}{\delta}\right) \right)\right)$. 

    By \Cref{theo:etaepsilonapproxfunctionsrelative}, with probability at least $1-\delta$ over the choice of $S$, for all ``light" pairs of points, i.e., pairs that are $\eps$-stabbed with probability at most $1/n$, we can estimate the probability through $S$ with an additive error of $3/(8n)$ whereas for all ``heavy'' pairs of points, i.e., those that are $\eps$-stabbed with probability more than $1/n$, we can estimate the probability of getting $\eps$-stabbed with a multiplicative error of $1\pm 3/8$. 
    Now let $\stree$ be any spanning tree of $P$ and let $E(\stree)$ be the set of its edges. Let $E_{\ell}\subseteq E(\stree)$ be the set of light edges of $\stree$, i.e., edges $(a,b)\in E(\stree)$ for which the probability that $\{a,b\}$ is $\eps$-stabbed is at most $1/n$ and let $E_{h}\subseteq E(\stree)$ be the set of heavy edges of $\stree$, i.e., edges $(a,b)\in E(\stree)$ for which the probability that $\{a,b\}$ is $\eps$-stabbed is more than $1/n$. Let 
    $\tilde{\sigma}(S,\stree)=
        m^{-1}\cdot {\sum_{i=1}^m \sigma(x_i,\stree)}$. 
    We have,
    \begin{align*}
        \tilde{\sigma}(S,\stree)=
        \frac{1}{m} \cdot \sum_{i=1}^m \sigma(x_i,\stree)
        &= \frac{1}{m} \cdot \sum_{i=1}^m \sum_{e\in E(\stree)}\sigma(x_i,e) \\&= 
   \frac{1}{m} \cdot \sum_{i=1}^m \left(\sum_{e\in E_{\ell}}\sigma(x_i,e)+\sum_{e\in E_{h}}\sigma(e,x_i) \right)\\
    & = \sum_{(a,b)\in E_{\ell}} \mu_{f_{ab}}(S) +  \sum_{(a,b)\in E_{h}} \mu_{f_{ab}}(S)\\
    %& \geq      \sum_{(a,b)\in E_{\ell}}     \left(\EE_{q\sim\cD_Q}[f_{ab}(q)] -\frac{3}{8n}\right)    +  \sum_{(a,b)\in E_{h}}     \frac{5\cdot \EE_{q\sim\cD_Q}[f_{ab}(q)]}{8}\\
    &\geq 
    \frac{5}{8}\cdot \EE_{q} \left[\sigma(q,\stree)\right] - \frac{3}{8} .
% \sum_{(a,b)\in E_{\ell}} 
%     \left(\EE_{q\sim\cD}[f_{ab}(q)] +\frac{1}{3n}\right)
%     +  \sum_{(a,b)\in E_{h}} 
%     \frac{11\cdot \EE_{q\sim\cD}[f_{ab}(q)]}{8}
%     \right]
    \end{align*}
   Similarly,
      \begin{align*}
       \tilde{\sigma}(S,\stree)= \frac{1}{m} \cdot \sum_{i=1}^m \sum_{e\in E(\stree)}\sigma(x_i,e) &\leq  
    \sum_{(a,b)\in E_{\ell}} 
    \left(\EE_{q}[f_{ab}(q)] +\frac{3}{8n}\right)+
       \sum_{(a,b)\in E_{h}} 
    \frac{11}{8}\cdot \EE_{q}[f_{ab}(q)]\\
    &\leq 
    \frac{11}{8}\cdot \EE_{q\sim\cD_Q} \left[\sigma(q,\stree)\right]  + \frac{3}{8} .
    \end{align*}
    Hence, for any spanning tree $\stree$ of $P$, $\tilde{\sigma}(S,\stree)$ is in
    \[\left[\frac{5}{8}\cdot \EE_{q\sim\cD_Q} \left[\sigma(q,\stree)\right] - \frac{3}{8} , \frac{11}{8}\cdot \EE_{q\sim\cD_Q} \left[\sigma(q,\stree)\right] + \frac{3}{8} \right].\] 
\end{proof}

The next lemma states that the number of nodes of a partition tree visited during a query search consists an upper bound on the $\eps$-stabbing number of a corresponding spanning path of the points for the same query. Its proof is a straightforward adaptation of the third item of \cite[Lemma 3.1]{CW89} and is moved to \Cref{app:ubstabbing}.

\begin{restatable}{lemma}{lemubstabbing}\label{lem:ubstabbing}
    If $\ptree$ is a partition tree for P, then there exists a spanning path $\ppath$ whose $\eps$-stabbing number does not exceed the visiting number of $\ptree$. 
\end{restatable}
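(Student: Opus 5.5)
Take $\ppath$ to be the path that visits the leaves of $\ptree$ in left-to-right order, i.e.\ the order of a depth-first traversal in which the children of every node are arranged in some fixed order. This is the same construction as the third item of \cite[Lemma 3.1]{CW89}; the only work is to check that the charging argument survives the passage from ordinary stabbing to $\eps$-stabbing and to the three-case definition of $\zeta$. For every node $v$, the points of $P_v$ then form a contiguous subpath of $\ppath$. I will prove the bound query by query: for each $q$, $\sigma(q,\ppath)\le \zeta(q,\ptree)$. Taking the maximum over $q$ then gives the statement about stabbing and visiting numbers.

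Fix $q$, and let $e=\{a,b\}$ be an edge of $\ppath$ that is $\eps$-stabbed by $q$. Let $u$ be the lowest common ancestor of the leaves containing $a$ and $b$ (if leaves can hold several points, $u$ may be that leaf itself; I treat this degenerate case separately below). Because $a$ and $b$ are consecutive in the left-to-right order, $a$ is the last point of $P_{c}$ and $b$ is the first point of $P_{c'}$, where $c,c'$ are consecutive children of $u$. Since $a,b\in P_u$, the set $P_u$ is $\eps$-stabbed by $q$. By the first item in the definition of $\zeta$, all children of $u$ are therefore counted, and in particular $c'$ is counted.

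Define the charging map $e\mapsto c'$, the child containing the right endpoint of $e$. I claim this map is injective. If two stabbed edges $e_1,e_2$ were both mapped to the same $c'$, then each of them would have to enter $P_{c'}$ from the left at its first point. Since $P_{c'}$ is a contiguous block of the path, it has a unique first point and a unique incoming edge, so $e_1=e_2$. Every stabbed edge is thus charged to a distinct node counted by $\zeta(q,\ptree)$, which gives $\sigma(q,\ppath)\le\zeta(q,\ptree)$. The root is never used as a target of the charging, so the count is not inflated.

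The step I expect to need the most care is the degenerate case, where leaves store several points or a node has only one child. If $a$ and $b$ lie in the same leaf, I would order the points inside each leaf arbitrarily and assume, as in the construction of \Cref{lem:pathtree}, that leaves are singletons; otherwise the leaf's parent would be stabbed and the edge could be charged to the leaf itself. If needed, I would restrict to partition trees in which every leaf is a singleton, as in \Cref{lem:pathtree}. The other two items in the definition of $\zeta$ are not needed for this direction; they only make $\zeta$ larger, so the inequality is preserved.
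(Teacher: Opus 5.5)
Your proposal is correct and follows the paper's proof: it uses the same left-to-right leaf path and the same observation that the lowest common ancestor of two consecutive leaves is $\eps$-stabbed, and it makes the same injective charge to one of that ancestor's children (you charge the child containing the right endpoint, the paper the one containing the left endpoint, which is symmetric). One caveat: your aside about charging a multi-point leaf to itself would not be injective, since several edges inside one leaf can be stabbed, and the lemma in fact fails for the one-node tree, whose visiting number is $1$; so the singleton-leaf restriction you fall back on, which the paper also assumes implicitly, is genuinely needed.
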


We now prove the main result of this section.

\begin{theorem}
\label{thm:samplingcomplexity}
    Let $P$ be a set of $n$ points in $\RR^d$ and let $\cD_Q$ be a ball query distribution having as support a countably infinite set $\cR$. 
    It suffices to sample a set $S$ of $m \in O(n\cdot (d\log n+\log(1/\delta)))$ queries independently from $\cD_Q$, so that with probability at least $1-\delta$, we can compute a partition tree $\tilde{\ptree}$ such that 
\[
\EE_{q\sim \cD_Q}\left[ \zeta(q,\tilde{\ptree})\right] = O\left(\left[\min_{\ptree} \EE_{q\sim \cD_Q}[ \zeta(q,\ptree)]+t_q\right]\cdot \log n\right),
\]
where $t_q= |\left(B(q,1+\eps)\setminus B(q,1)\right)\cap P|$. Moreover, given $S$, the running time to compute $\tilde{\ptree}$ is in
$O(n^3 (d\log n+\log(1/\delta))$. 
\end{theorem}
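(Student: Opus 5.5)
The plan is to compute, from the sample $S$, a spanning tree $\tilde{T}$ that exactly minimizes the empirical average $\epsilon$-stabbing number $\tilde{\sigma}(S,T)=m^{-1}\sum_{q\in S}\sigma(q,T)$. This objective is linear in the edges: $\tilde{\sigma}(S,T)=\sum_{(a,b)\in E(T)}\mu_{f_{ab}}(S)$. So the step is a minimum spanning tree computation on the complete graph on $P$, where edge $\{a,b\}$ gets weight $w_{ab}=\mu_{f_{ab}}(S)$. Then I convert $\tilde{T}$ into a spanning path $\tilde{\ppath}$ via \Cref{lem:treepath}, and into a binary partition tree $\tilde{\ptree}$ via \Cref{lem:pathtree}.

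\emph{Running time.} Computing all $n^2$ weights takes $O(n^2\cdot m\cdot d)$ time, since for each pair and each sample query we evaluate two distances. With $m=O(n(d\log n+\log(1/\delta)))$ this is the dominant term. Absorbing the per-distance cost against the $d$ in $m$ as in the statement, or precomputing all $n\cdot m$ query--point distances in $O(nmd)$ and then doing $O(n^2 m)$ comparisons, gives $O(n^3(d\log n+\log(1/\delta)))$. Prim's algorithm on the dense graph adds only $O(n^2)$, and the two transformations are near-linear.

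\emph{Approximation guarantee.} Condition on the event of \Cref{lemma:spanningtreespreserved}, which has probability $1-\delta$. Let $\ptree^*$ minimize $\EE_q[\zeta(q,\ptree)]$. By \Cref{lem:ubstabbing}, there is a spanning path $\ppath^*$, which is in particular a spanning tree, with $\sigma(q,\ppath^*)\le\zeta(q,\ptree^*)$ for every $q$; I would check that the construction in that lemma is pointwise in $q$, so the bound survives taking expectations. Using both sides of \Cref{lemma:spanningtreespreserved} and the optimality of $\tilde{T}$ on $S$,
\[
\tfrac58\EE_q[\sigma(q,\tilde T)]-\tfrac38\le\tilde\sigma(S,\tilde T)\le\tilde\sigma(S,\ppath^*)\le\tfrac{11}{8}\EE_q[\sigma(q,\ppath^*)]+\tfrac38 .
\]
Hence $\EE_q[\sigma(q,\tilde T)]=O(\EE_q[\zeta(q,\ptree^*)]+1)$, and the additive $1$ is absorbed because $\zeta\ge1$ always (the root counts). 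Pointwise, \Cref{lem:treepath,lem:pathtree} give $\zeta(q,\tilde\ptree)\le 2[2\sigma(q,\tilde T)+3t_q]\log n+1$. Taking expectations yields the claimed bound, with $t_q$ understood inside the expectation as $\EE_q[t_q]$.

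\emph{Main obstacle.} The delicate step is \Cref{lemma:spanningtreespreserved}, whose VC bound makes the uniform convergence hold simultaneously for all $n^{n-2}$ trees. Since it is given, the remaining care is in two places. First, the comparison with $\ppath^*$ must use only the two-sided estimate and never a uniform statement over partition trees. Second, the additive $3/8$ and the constant factors must be absorbed into the $O(\cdot)$ via $\zeta\ge 1$.
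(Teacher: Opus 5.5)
Your proposal is correct and is essentially the paper's proof. Like the paper, you compute the empirical minimizer exactly as a minimum spanning tree on the edge weights $\sum_{q\in S}\sigma(q,\{e\})$, compare it with the optimum using both sides of \Cref{lemma:spanningtreespreserved} together with \Cref{lem:ubstabbing}, and then convert via \Cref{lem:treepath,lem:pathtree}. Your small deviations are, if anything, more careful than the paper's write-up: you compare directly with the path obtained from $\ptree^*$ rather than with the expected-optimal tree $\stree^\ast$, and you absorb the additive $3/8$ explicitly via $\zeta\geq 1$, whereas the paper hides it inside its $\Omega(1)$/$O(1)$ chain.
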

\begin{proof}
    By \Cref{lemma:spanningtreespreserved}, with probability at least $1-\delta$,   
for any spanning tree $\stree$ of $P$, $m^{-1}\cdot  \sum_{q\in S} \sigma(q,\stree)$ is in 
\[
  \left[\frac{5}{8}\cdot \EE_{q\sim\cD_Q} \left[\sigma(q,\stree)\right] - \frac{3}{8} , \frac{11}{8}\cdot \EE_{q\sim\cD_Q} \left[\sigma(q,\stree)\right] + \frac{3}{8} \right].
\]
Now, let $E(\stree)$ denote the edges of spanning tree $\stree$ and $\tilde{\stree}=\mathop{\arg\min}_T \{m^{-1} \cdot \sum_{q\in S} \sigma(q,\stree)\}$. Notice that  
$\sum_{q\in S} \sigma(q,\stree) = \sum_{q\in S} \sum_{e\in E(\stree)} \sigma(q,\{e\})=  \sum_{e\in E(\stree)} \sum_{q\in S} \sigma(q,\{e\})$ for any spanning tree $\stree$.
Hence, we can compute $\tilde{\stree}$, by first constructing a complete weighted graph on $P$ where we assign weight $\sum_{q\in S} \sigma(q,\{e\})$ to each edge $e\in {{P} \choose {2}}$, and then computing the minimum spanning tree on that graph, using Kruskal's algorithm. The running time of this step is $O(n^2 m)$, which corresponds to the step of building the graph. 

Let $\stree^{\ast}$ be the spanning tree that minimizes $\EE_{q\sim\cD_Q} [\sigma(q,\stree)] $
over all spanning trees $\stree$ of $P$. 
By \Cref{lem:ubstabbing} we get $\EE_{q\sim\cD_Q} [\sigma(q,\stree^{\ast})] \leq  \min_{\ptree} \EE_{q\sim \cD_Q}[\zeta(q,\ptree)]$ and by \Cref{lemma:spanningtreespreserved} we get:
\[\Omega(1)\cdot\EE_{q\sim\cD_Q} [\sigma(q,\tilde{\stree})]\leq m^{-1} \sum_{q\in S} \sigma(q,\tilde{\stree})\leq m^{-1} \sum_{q\in S} \sigma(q,{\stree}^\ast)\leq O(1)\cdot \EE_{q\sim\cD_Q} [\sigma(q,{\stree}^\ast)] \]
Hence $\EE_{q\sim\cD_Q} [\sigma(q,\tilde{\stree})] \subseteq  O\left(\min_{\ptree} \EE_{q\sim \cD_Q}[\zeta(q,\ptree)]\right)$.
Given $\tilde{\stree}$, by \Cref{lem:treepath,lem:pathtree} we can construct a binary partition tree $\tilde{\ptree}$ of $P$ such that  
\begin{align*}
    \EE_{q\sim \cD_Q}[\zeta(q,\tilde{\ptree})] &=  O\left(\left[ 
\EE_{q\sim \cD_Q}[\sigma(q,\tilde{\stree})] +t_q\right]\cdot\log n\right)\\
&= O\left(\left[\min_{\ptree} \EE_{q\sim \cD_Q}[\zeta(q,\ptree)] +t_q\right]\cdot\log n\right)
\end{align*}  
in $O(n)$ time. Thus, the overall running time is $O(n^2 m ) \subseteq O(n^3 (d\log n+\log(1/\delta))$. 
\end{proof}

\bibliography{jlann}

\appendix
\section{Supplementary Material for \Cref{sec:prel}}

\subsection{Proof of \Cref{theo:etaepsilonapproxfunctionsrelative}}
\label{app1}
\corapproxfunctions*
\begin{proof}
 We first consider the case $ \EE_{q\sim \cD} [f(q)] \leq \eta$. 
By \Cref{theo:etaepsilonapproxfunctions}, with $ \eps = 1/9$, we obtain 
\[d_{\eta}\left(\EE_{q\sim \cD}[f(q)],\mu_f(S)\right)\leq  1/9.\] Hence, 
    $\left|\EE_{q\sim \cD}[f(q)]-\mu_f(S)\right|\leq  (1/9) \cdot (\eta + \EE_{q\sim \cD}[f(q)] + \mu_f(S))\leq \frac{2\eta}{9} + \frac{\mu_f(S)}{9}$ which implies $\left|\EE_{q\sim \cD}[f(q)]-\mu_f(S)\right|\leq \frac{\eta}{3}$, since 
    \[
    \EE_{q\sim \cD}[f(q)]\geq \mu_f(S)
    \implies \left|\EE_{q\sim \cD}[f(q)]-\mu_f(S)\right|\leq \frac{2\eta}{9} + \frac{\EE_{q\sim \cD}[f(q)]}{9} \leq \frac{\eta}{3},
    \]
    and 
    \begin{align*}
        \EE_{q\sim \cD}[f(q)]< \mu_f(S)
    &\implies
       \left|\EE_{q\sim \cD}[f(q)]-\mu_f(S)\right|= \mu_f(S) - \EE_{q\sim \cD}[f(q)] \leq \frac{2\eta}{9} + \frac{\mu_f(S)}{9}\\
    &\implies
        \left|\EE_{q\sim \cD}[f(q)]-\mu_f(S)\right|=\mu_f(S) - \EE_{q\sim \cD}[f(q)] \leq \frac{3\eta}{8}.
    \end{align*}
    Next, we consider the case $\EE_{q\sim \cD} [f(q)] > \eta$. By \Cref{theo:etaepsilonapproxfunctions} with $\eps =1/9$,  $\left|\EE_{q\sim \cD}[f(q)]-\mu_f(S)\right|\leq  \frac{1}{9} \cdot \left(\eta + \EE_{q\sim \cD}[f(q)] + \mu_f(S)\right)\leq \frac{ 2\cdot\EE_{q\sim \cD}[f(q)]}{9} + \frac{\mu_f(S)}{9}$, which implies the claim since:
    \[
    \EE_{q\sim \cD}[f(q)]\geq \mu_f(S)
    \implies 
    \left|\EE_{q\sim \cD}[f(q)]-\mu_f(S)\right|\leq  \frac{1}{3}\cdot \EE_{q\sim \cD}[f(q)],
    \]
    and
    \begin{align*}
    \EE_{q\sim \cD}[f(q)]< \mu_f(S)
    &\implies \left|\EE_{q\sim \cD}[f(q)]-\mu_f(S)\right|= \mu_f(S) - \EE_{q\sim \cD}[f(q)] \leq \frac{3}{8}\cdot \EE_{q\sim \cD}[f(q)].   
    \end{align*}
\end{proof}

\subsection{Proof of \Cref{lemma:multisampling}}
\label{app2}
\lemmamultisampling*
\begin{proof}
 We preprocess a balanced binary tree on values $\gamma_1,\ldots,\gamma_{n}$ where each internal node $v$ also stores the sum $s_v$ of all leaves of the subtree rooted at $v$. The time needed to compute the tree is $O(n\log n)$, which also bounds its space usage. 

    To sample a point from $P$, we sample uniformly at random an integer $z\in [\sum_{i} \gamma_i]$ and we search the binary tree for the left-most leaf with value, say, $\gamma_j$ such that $\sum_{i\leq j} \gamma_i \geq z$. We return $p_j$. The time needed to traverse the tree is $O(\log n)$.

    To update the weight of a point $p_i\in P$, it suffices to update the sums stored in the ancestors of the leaf storing $\gamma_i$. Assuming that $\gamma_i$ is being replaced by $\gamma_i'$, we traverse the $O(\log n)$ ancestors, and we add $\gamma_i' -\gamma_i$ in each one of them. This requires $O(\log n)$ time in total.
\end{proof}

\section{Supplementary Material for \Cref{section:stabbing}}

\subsection{Pseudocode of \Cref{thm:stabds}}
\label{app:pseudocode}

\begin{algorithm}[H]
	\caption{Query}
	\label{alg:query}
	\begin{algorithmic}[1]
    \Statex \textbf{Input:} $q \in \RR^d, d' = \frac{\log |P|}{1+\eps^2}$ 
    \State $L \gets 100 n^{1-\beta}$, $\beta = \frac{\eps^2}{19200(1+\eps^2)}$
    \State $\mathcal{I}_{out}\gets \emptyset$
    \State $m\gets 0$
    \State $a \gets False$
    \For{$x \in \{0,1\}^{d'}:~\|x-f(q)\|_1\leq t$ } \Comment{search for a near point}
        \For{$i \in D[x]$} \Comment{iterate over indices in the list}
            \State $m\gets m+1$
            \If{$\|f^{-1}(p_i)-q\|\leq (1+\eps)r$}
            \State $a\gets True$
            \State \textbf{break}
            \ElsIf{$m>L$} \Comment{if number of false positives exceeds limit stop}
            \State \Return ``no"
            \EndIf
            \EndFor
        \If{a} \Comment{near point has been found}
            \State \textbf{break}
        \EndIf
    \EndFor
    \If{a}
        \State $\mathcal{I}_{out}\gets \mathcal{I}_{out}\cup \{i\}$
    \EndIf
    \State $b\gets False$
     \For{$x \in \{0,1\}^{d'}:~\|x-f(q)\|_1\geq t+\eps d'$ } \Comment{search for a far point}
        \For{$i \in D[x]$} \Comment{iterate over indices in the list}
            \State $m\gets m+1$
            \If{$\|f^{-1}(p_i)-q\|\geq r$}
            \State $b\gets True$
            \State \textbf{break}
            \ElsIf{$m>L$} \Comment{if number of false positives exceeds limit stop}
            \State \Return ``no"
            \EndIf
            \EndFor
        \If{b} \Comment{far point has been found}
            \State \textbf{break}
        \EndIf
    \EndFor
    \If{b}
        \State $\mathcal{I}_{out}\gets \mathcal{I}_{out}\cup \{i\}$
    \EndIf
    \State \Return $\mathcal{I}_{out}$        
    \end{algorithmic}
\end{algorithm}    

%\thmstabds*
\begin{algorithm}[H]
	\caption{Preprocessing}
	\label{alg:preprocessing}
	\begin{algorithmic}[1]
    \Statex \textbf{Input:} $P=\{p_1,\ldots, p_n\} \subset \RR^d$, $r>0$, $\eps \in (0,1)$ 
    \State Initialize dictionary $\mathcal{D}$
		\State Sample $f:\RR^d \to \{0,1\}^{d'}$ as in \Cref{lemma:embedtohamming}, $d' = \frac{\log |P|}{1+\eps^2}$
        \For{$p_i \in P$}
                \If{$\mathcal{D}[f(p_i)]$ is empty}
                    \State Initialize an empty list in $\mathcal{D}[f(p_i)]$
                    \State add $i$ to the list
                 \Else 
                 \State add $i$ to the list in $\mathcal{D}[f(p_i)]$
                \EndIf
          \EndFor      
          \State \Return $\mathcal{D}$
	\end{algorithmic}
\end{algorithm}

\section{Supplementary Material for \Cref{sec:datadr}}

\subsection{Proof of 
\Cref{lem:ubstabbing}}
\label{app:ubstabbing}

\lemubstabbing*

\begin{proof}
     We assign an arbitrary left-to-right order among the children of every internal node of $\ptree$, and let $l_1,\dots,l_n$ denote the leaves of $\ptree$ from left to right. Next, form the spanning path $(x_1,\dots, x_n)$ of $P$, where $\{x_i\}$ is the node-set of $l_i$. If the edge $\{x_i, x_{i+1}\}$ is $\eps$-stabbed by a query $q$, we charge this event to the unique child of the nearest common ancestor $v$ of $l_i$, and $l_{i+1}$ that is also an ancestor of $l_i$ (or $l_i$ itself). Since the pointset corresponding to $v$ is necessarily $\eps$-stabbed by $q$, the child which takes the charge is visited. Furthermore, such a node cannot be charged twice. 
\end{proof}

\end{document}